\newtheorem{proposition}{Proposition}
\newtheorem{lemma}{Lemma}
\newtheorem{theorem}{Theorem}
\newtheorem{corollary}{Corollary}
\newcommand{\beq}{\begin{equation}}
\newcommand{\eeq}{\end{equation}}
\newcommand{\E}[1]{E \kern -2pt \left\{ #1 \right\}}
\renewcommand{\Pr}[1]{\mathrm{Pr} \kern -2pt \left\{ #1 \right\}}
\DeclareMathOperator{\Var}{Var}
\DeclareMathOperator{\Tr}{Tr}
\DeclareMathOperator{\diag}{diag}
\renewcommand{\b}{{\boldsymbol b}}
\newcommand{\e}{{\boldsymbol e}}
\newcommand{\f}{{\boldsymbol f}}
\newcommand{\h}{{\boldsymbol h}}
\renewcommand{\k}{{\boldsymbol k}}
\renewcommand{\v}{{\boldsymbol v}}
\newcommand{\w}{{\boldsymbol w}}
\newcommand{\x}{{\boldsymbol x}}
\newcommand{\A}{{\boldsymbol A}}
\renewcommand{\H}{{\boldsymbol H}}
\newcommand{\G}{{\boldsymbol G}}
\newcommand{\I}{{\boldsymbol I}}
\newcommand{\J}{{\boldsymbol J}}
\newcommand{\M}{{\boldsymbol M}}
\newcommand{\R}{{\boldsymbol R}}
\newcommand{\U}{{\boldsymbol U}}
\newcommand{\W}{{\boldsymbol W}}
\newcommand{\bmu}{{\boldsymbol \mu}}
\newcommand{\bdel}{{\boldsymbol \delta}}
\newcommand{\bth}{{\boldsymbol \theta}}
\newcommand{\bnu}{{\boldsymbol \nu}}
\newcommand{\hth}{{\hat{\theta}}}
\newcommand{\hbth}{\hat{\bth}}
\newcommand{\RR}{{\mathbb R}}
\newcommand{\MSE}{{\mathrm{MSE}}}
\newcommand{\CRB}{{\mathrm{CRB}}}
\newcommand{\One}{\mathbbm{1}}
\newcommand{\BesselI}[2]{I_{#1} \! \left( {#2} \right)}
\newcommand{\BesselK}[2]{K_{#1} \! \left( {#2} \right)}
\newcommand{\gammainc}[2]{\Gamma_{#1} \! \left( {#2} \right)}
\newcommand{\Qfunc}[1]{Q\!\left(#1\right)}
\newcommand{\bra}{\left \langle }
\newcommand{\ket}{\right \rangle }
\newcommand{\argmin}{\mathop{\arg\min}}
\newcommand{\esssup}{\mathop{\mathrm{ess}\sup}}
\newcommand{\pd}[2]{\frac{\partial #1}{\partial #2}}
\newcommand{\pdd}[3]{\frac{\partial^2 #1}{\partial #2 \partial #3}}
\newcommand{\CoTn}{{C^1}}
\newcommand{\LtTn}{{L^2}}
\newcommand{\HoTn}{{H^1}}
\newcommand{\bol}{{\b_{\text{opt}}^{(\ell)}}}
\newcommand{\bopt}{{\b_{\text{opt}}}}
\newcommand*\BIG[2]{\mathclose{\hbox{$\left.\vbox to#1{}\right#2\n@space$}}}
\begin{document}

\title{A Lower~Bound on the Bayesian~MSE Based~on~the Optimal Bias Function}

\author{Zvika~Ben-Haim,~\IEEEmembership{Student~Member,~IEEE,}%
and Yonina~C.~Eldar,~\IEEEmembership{Senior~Member,~IEEE}%
\thanks{The authors are with the Department of Electrical Engineering,
Tech\-nion---Is\-rael Institute of Technology, Haifa 32000, Israel
(e-mail: zvikabh@technion.ac.il; yonina@ee.technion.ac.il).
This work was supported in part by the Israel Science Foundation under Grant no. 1081/07 and by the European Commission in the framework of the FP7 Network of Excellence in Wireless COMmunications NEWCOM++ (contract no. 216715).}}

\maketitle

\begin{abstract}
A lower bound on the minimum mean-squared error (MSE) in a Bayesian estimation problem is proposed in this paper. This bound utilizes a well-known connection to the deterministic estimation setting. Using the prior distribution, the bias function which minimizes the Cram\'er--Rao bound can be determined, resulting in a lower bound on the Bayesian MSE\@. The bound is developed for the general case of a vector parameter with an arbitrary probability distribution, and is shown to be asymptotically tight in both the high and low signal-to-noise ratio regimes. A numerical study demonstrates several cases in which the proposed technique is both simpler to compute and tighter than alternative methods.
\end{abstract}

\begin{keywords}
Bayesian bounds, Bayesian estimation, minimum mean-squared error estimation, optimal bias, performance bounds.
\end{keywords}

\section{Introduction}

The goal of estimation theory is to infer the value of an unknown parameter based on observations. A common approach to this problem is the Bayesian framework, in which the estimate is constructed by combining the measurements with prior information about the parameter \cite{berger85}. In this setting, the parameter $\bth$ is random, and its distribution describes the \emph{a priori} knowledge of the unknown value. In addition, measurements $\x$ are obtained, whose conditional distribution, given $\bth$, provides further information about the parameter. The objective is to construct an estimator $\hbth$, which is a function of the measurements, so that $\hbth$ is close to $\bth$ in some sense. A common measure of the quality of an estimator is its mean-squared error (MSE), given by $E\{\|\bth-\hbth\|^2\}$.

It is well-known that the posterior mean $\E{\bth|\x}$ is the technique minimizing the MSE\@. Thus, from a theoretical perspective, there is no difficulty in finding the minimum MSE (MMSE) estimator in any given problem. In practice, however, the complexity of computing the posterior mean is often prohibitive. As a result, various alternatives, such as the maximum \emph{a posteriori} (MAP) technique, have been developed \cite{kay93}. The purpose of such methods is to approach the performance of the MMSE estimator with a computationally efficient algorithm.

An important goal is to quantify the performance degradation resulting from the use of these suboptimal techniques. One way to do this is to compare the MSE of the method used in practice with the MMSE\@. Unfortunately, computation of the MMSE is itself infeasible in many cases. This has led to a large body of work seeking to find simple lower bounds on the MMSE in various estimation problems \cite{vantrees68, ZivZakai69, young71, BelliniTartara74, ChazanZakaiZiv75, BobrovskiZakai76, WeissWeinstein85, WeinsteinWeiss88, bell97, renaux06}.

Generally speaking, previous bounds can be divided into two categories. The Weiss--Weinstein family is based on a covariance inequality and includes the Bayesian Cram\'{e}r--Rao bound \cite{vantrees68}, the Bobrovski--Zakai bound \cite{BobrovskiZakai76}, and the Weiss--Weinstein bound \cite{WeissWeinstein85, WeinsteinWeiss88}. The Ziv--Zakai family of bounds is based on comparing the estimation problem to a related detection scenario. This family includes the Ziv--Zakai bound \cite{ZivZakai69} and its improvements, notably the Bellini--Tartara bound \cite{BelliniTartara74}, the Chazan--Zakai--Ziv bound \cite{ChazanZakaiZiv75}, and the generalization of Bell \emph{et al.}\ \cite{bell97}. Recently, Renaux \emph{et~al.} have combined both approaches \cite{renaux06}.

The accuracy of the bounds described above is usually tested numerically in particular estimation settings. Few of the previous results provide any sort of analytical proof of accuracy, even under asymptotic conditions. Bellini and Tartara \cite{BelliniTartara74} briefly discuss performance of their bound at high signal-to-noise ratio (SNR), and Bell \emph{et al.}\ \cite{bell97} prove that their bound converges to the true value at low SNR for a particular family of Gaussian-like probability distributions. To the best of our knowledge, there are no other results concerning the asymptotic performance of Bayesian bounds.

A different estimation setting arises when one considers $\bth$ as a \emph{deterministic} unknown parameter. In this case, too, a common goal is to construct an estimator having low MSE\@. However, the term MSE has a very different meaning in the deterministic setting, since in this case, the expectation is taken only over the random variable $\x$. One elementary difference with far-reaching implications is that in the Bayesian case, the MSE is a single real number, whereas the deterministic MSE is a function of the unknown parameter $\bth$ \cite{lehmann98, eldar08b, kay08}.

Many lower bounds have been developed for the deterministic setting, as well. These include classical results such as the Cram\'er--Rao \cite{cramer45, rao45}, Hammersley--Chapman--Robbins \cite{hammersley50, ChapmanRobbins51}, Bhattacharya \cite{bhattacharya66}, and Barankin \cite{barankin49} bounds, as well as more recent results \cite{abel93, hero96, forster02, eldar04e, eldar06, eldar08d}. By far the simplest and most commonly used of these approaches is the Cram\'er--Rao bound (CRB). Like most other deterministic bounds, the CRB deals explicitly with unbiased estimators, or, equivalently, with estimators having a specific, pre-specified bias function. Two exceptions are the uniform CRB \cite{hero96, eldar04e} and the minimax linear-bias bound \cite{eldar06, eldar08d}. The CRB is known to be \emph{asymptotically} tight in many cases, even though many later bounds are sharper than it \cite{ibragimov81, eldar04e, eldar08b}.

Although the deterministic and Bayesian settings stem from different points of view, there exist insightful relations between the two approaches. The basis for this connection is the fact that by adding a prior distribution for $\bth$, any deterministic problem can be transformed to a corresponding Bayesian setting. Several theorems relate the performance of corresponding Bayesian and deterministic scenarios \cite{lehmann98}. As a consequence, numerous bounds have both a deterministic and a Bayesian version \cite{vantrees68, WeinsteinWeiss88, renaux06, renaux_thesis}.

The simplicity and asymptotic tightness of the deterministic CRB motivate its use in problems in which $\bth$ is random. Such an application was described by Young and Westerberg \cite{young71}, who considered the case of a scalar $\theta$ constrained to the interval $[\theta_0, \theta_1]$. They used the prior distribution of $\theta$ to determine the optimal bias function for use in the biased CRB, and thus obtained a Bayesian bound. It should be noted that this result differs from the Bayesian CRB of Van Trees \cite{vantrees68}; the two bounds are compared in Section~\ref{ss:bayes bound}. We refer to the result of Young and Westerberg as the optimal-bias bound (OBB), since it is based on choosing the bias function which optimizes the CRB using the given prior distribution.

This paper provides an extension and a deeper analysis of the OBB\@. Specifically, we generalize the bound to an arbitrary $n$-dimensional estimation setting \cite{ben-haim07}. The bound is determined by finding the solution to a certain partial differential equation. Using tools from functional analysis, we demonstrate that a unique solution exists for this differential equation. Under suitable symmetry conditions, it is shown that the method can be reduced to the solution of an ordinary differential equation and, in some cases, presented in closed form.

The mathematical tools employed in this paper are also used for characterizing the performance of the OBB\@. Specifically, it is demonstrated analytically that the proposed bound is asymptotically tight for both high and low SNR values. Furthermore, the OBB is compared with several other bounds; in the examples considered, the OBB is both simpler computationally and more accurate than all relevant alternatives.

The remainder of this paper is organized as follows. In Section~\ref{se:bound}, we derive the OBB for a vector parameter. Section~\ref{se:safeguards} discusses some mathematical concepts required to ensure the existence of the OBB\@. In Section~\ref{se:calc}, a practical technique for calculating the bound is developed using variational calculus. In Section~\ref{se:props}, we demonstrate some properties of the OBB, including its asymptotic tightness. Finally, in Section~\ref{se:compare}, we compare the performance of the bound with that of other relevant techniques.

\section{The Optimal-Bias Bound}
\label{se:bound}

In this section, we derive the OBB for the general vector case. To this end, we first examine the relation between the Bayesian and deterministic estimation settings (Section~\ref{ss:bayes-det}). Next, we focus on the deterministic case and review the basic properties of the CRB (Section~\ref{ss:det}). Finally, the OBB is derived from the CRB (Section~\ref{ss:bayes bound}).

The focus of this paper is the Bayesian estimation problem, but the bound we propose stems from the theory of deterministic estimation. To avoid confusion, we will indicate that a particular quantity refers to the deterministic setting by appending the symbol $;\bth$ to it. For example, the notation $\E{\cdot}$ denotes expectation over \emph{both} $\bth$ and $\x$, i.e., expectation in the Bayesian sense, while expectation solely over $\x$ (in the deterministic setting) is denoted by $\E{\cdot ; \bth}$. The notation $\E{\cdot \mid \bth}$ indicates Bayesian expectation conditioned on $\bth$.

Some further notation used throughout the paper is as follows. Lowercase boldface letters signify vectors and uppercase boldface letters indicate matrices. The $i$th component of a vector $\v$ is denoted $v_i$, while $\v^{(1)}, \v^{(2)}, \ldots$ signifies a sequence of vectors. The derivative $\partial f/\partial \v$ of a function $f(\v)$ is a vector function whose $i$th element is $\partial f/\partial v_i$. Similarly, given a vector function $\b(\bth)$, the derivative $\partial \b/\partial \bth$ is defined as the matrix function whose $(i,j)$th entry is $\partial b_i/\partial \theta_j$. The squared Euclidean norm $\v^T \v$ of a vector $\v$ is denoted $\|\v\|^2$, while the squared Frobenius norm $\Tr(\M \M^T)$ of a matrix $\M$ is denoted $\|\M\|_F^2$. In Section~\ref{se:safeguards}, we will also define some functional norms, which will be of use later in the paper.

\subsection{The Bayesian--Deterministic Connection}
\label{ss:bayes-det}

We now review a fundamental relation between the Bayes\-ian and deterministic estimation settings. Let $\bth$ be an unknown random vector in $\RR^n$ and let $\x$ be a measurement vector. The joint probability density function (pdf) of $\bth$ and $\x$ is $p_{\x,\bth}(\x,\bth) = p_{\x|\bth}(\x|\bth) p_\bth(\bth)$, where $p_{\bth}$ is the prior distribution of $\bth$ and $p_{\x|\bth}$ is the conditional distribution of $\x$ given $\bth$. For later use, define the set $\Theta$ of feasible parameter values by
\beq \label{eq:def Theta}
\Theta = \{ \bth \in \RR^n: p_\bth(\bth) > 0 \}.
\eeq
Suppose $\hbth = \hbth(\x)$ is an estimator of $\bth$. Its (Bayesian) MSE is given by
\beq
\MSE = \E{\|\hbth-\bth\|^2} = \int \|\hbth-\bth\|^2 p_{\x,\bth}(\x,\bth) d\x d\bth.
\eeq
By the law of total expectation, we have
\begin{align} \label{eq:total expec}
\MSE
&= \int \left( \int \|\hbth-\bth\|^2 p_{\x|\bth}(\x|\bth) d\x \right) p_\bth(\bth) d\bth
\notag\\
&= \E{ \E{ \|\hbth-\bth\|^2 \Big| \bth } }.
\end{align}

Now consider a deterministic estimation setting, i.e., suppose $\bth$ is a deterministic unknown which is to be estimated from random measurements $\x$. Let the distribution $p_{\x;\bth}$ of $\x$ (as a function of $\bth$) be given by $p_{\x;\bth}(\x;\bth) = p_{\x|\bth}(\x|\bth)$, i.e., the distribution of $\x$ in the deterministic case equals the conditional distribution in the corresponding Bayesian problem.

The estimator $\hbth$ defined above is simply a function of the measurements, and can therefore be applied in the deterministic case as well. Its deterministic MSE is given by \beq \label{eq:det mse}
\E{ \|\hbth-\bth\|^2 ; \bth } = \int \|\hbth-\bth\|^2 p_{\x;\bth}(\x;\bth) d\x
\eeq
Since $p_{\x;\bth}(\x;\bth) = p_{\x|\bth}(\x|\bth)$, we have
\beq
\E{ \|\hbth-\bth\|^2 ; \bth } = \E{ \|\hbth-\bth\|^2 \Big| \bth }.
\eeq
Combining this fact with \eqref{eq:total expec}, we find that the Bayesian MSE equals the expectation of the MSE of the corresponding deterministic problem, i.e.
\beq \label{eq:bayes-det}
\E{ \|\hbth-\bth\|^2 } = \E{ \E{ \|\hbth-\bth\|^2 ; \bth } }.
\eeq
This relation will be used to construct the OBB in Section~\ref{ss:bayes bound}.

\subsection{The Deterministic Cram\'er--Rao Bound}
\label{ss:det}

Before developing the OBB, we review some basic results in the deterministic estimation setting. Suppose $\bth$ is a deterministic parameter vector and let $\x$ be a measurement vector having pdf $p_{\x;\bth}(\x;\bth)$. Denote by $\Theta \subseteq \RR^n$ the set of all possible values of $\bth$. We assume for technical reasons that $\Theta$ is an open set.\footnote{This is required in order to ensure that one can discuss differentiability of $p_{\x;\bth}$ with respect to $\bth$ at any point $\bth \in \Theta$. In the Bayesian setting to which we will return in Section~\ref{ss:bayes bound}, $\Theta$ is defined by \eqref{eq:def Theta}; in this case, adding a boundary to $\Theta$ essentially leaves the setting unchanged, as long as the prior probability for $\bth$ to be on the boundary of $\Theta$ is zero. Therefore, this requirement is of little practical relevance.}

Let $\hbth$ be an estimator of $\bth$ from the measurements $\x$. We require the following regularity conditions to ensure that the CRB holds \cite[\S 3.1.3]{shao03}.

\begin{enumerate}
\item
$p_{\x;\bth}(\x;\bth)$ is continuously differentiable with respect to $\bth$. This condition is required to ensure the existence of the Fisher information.

\item
The Fisher information matrix $\J(\bth)$, defined by
\beq \label{eq:def J}
[\J(\bth)]_{ij} = \E{\pd{\log p_{\x;\bth}}{\theta_i}
                     \pd{\log p_{\x;\bth}}{\theta_j} \, ; \bth}
\eeq
is bounded and positive definite for all $\bth \in \Theta$. This ensures that the measurements contain data about the unknown parameter.

\item
Exchanging the integral and derivative in the equation
\beq \label{eq:CRB reg cond}
  \int t(\x) \pd{}{\theta_i} p_{\x;\theta}(\x;\theta) d\x
= \pd{}{\theta_i} \int t(\x) p_{\x;\theta}(\x;\theta) d\x
\eeq
is justified for any measurable function $t(\x)$, in the sense that, if one side exists, then the other exists and the two sides are equal. A sufficient condition for this to hold is that the support of $p_{\x;\bth}$ does not depend on $\bth$.

\item
All estimators $\hbth$ are Borel measurable functions which satisfy
\beq \label{eq:dom}
\left\| \pd{p_{\x;\bth}}{\bth} \, \hbth^T \right\|_F \le g(\x) \text{ for all }\bth
\eeq
for some integrable function $g(\x)$. This technical requirement is needed in order to exclude certain pathological estimators whose statistical behavior is insufficiently smooth to allow the application of the CRB\@.
\end{enumerate}

The bias of an estimator $\hbth$ is defined as
\beq
\b(\bth) = \E{\hbth;\bth} - \bth.
\eeq
Under the above assumptions, it can be shown that the bias of any estimator is continuously differentiable \cite[Lemma~2]{young71}. Furthermore, under these assumptions, the CRB holds, and thus, for any estimator having bias $\b(\bth)$, we have
\begin{align} \label{eq:CRB}
E&\!\left\{ \|\bth - \hbth\|^2 ; \bth \right\} \ge \CRB[\b,\bth] \notag\\
&\triangleq
\Tr\!\left[\left(\I+\pd{\b}{\bth}\right) \J^{-1}(\bth) \left(\I+\pd{\b}{\bth}\right)^T \right]
+ \|\b(\bth)\|^2.
\end{align}
A more common form of the CRB is obtained by restricting attention to unbiased estimators (i.e., techniques for which $\b(\bth) = {\bf 0}$). Under the unbiasedness assumption, the bound simplifies to $\MSE \ge \Tr(\J^{-1}(\bth))$. However, in the sequel we will make use of the general form \eqref{eq:CRB}.

\subsection{A Bayesian Bound from the CRB}
\label{ss:bayes bound}

The OBB of Young and Westerberg \cite{young71} is based on applying the Bayesian--deterministic connection described in Section~\ref{ss:bayes-det} to the deterministic CRB \eqref{eq:CRB}. Specifically, returning now to the Bayesian setting, one can combine \eqref{eq:bayes-det} and \eqref{eq:CRB} to obtain that, for any estimator $\hbth$ with bias function $\b(\bth)$,
\beq \label{eq:average CRB}
\E{\|\bth - \hbth\|^2} \ge
Z[\b] \triangleq \int_\Theta \CRB[\b,\bth]\, p_\bth(d\bth)
\eeq
where the expectation is now performed over both $\bth$ and $\x$. Note that \eqref{eq:average CRB} describes the Bayesian MSE as a function of a deterministic property (the bias) of $\hbth$. Since any estimator has \emph{some} bias function, and since all bias functions are continuously differentiable in our setting, minimizing $Z[\b]$ over all continuously differentiable functions $\b$ yields a lower bound on the MSE of any Bayesian estimator. Thus, under the regularity conditions of Section~\ref{ss:det}, a lower bound on the Bayesian MSE is given by
\begin{multline} \label{eq:inf}
s = \inf_{\b \in \CoTn} \int_\Theta
\Bigg[ \|\b(\bth)\|^2 + \\
\Tr\!\left(\left(\I+\pd{\b}{\bth}\right) \J^{-1}(\bth) \left(\I+\pd{\b}{\bth}\right)^T \right) \Bigg]
 p_\bth(d\bth)
\end{multline}
where $\CoTn$ is the space of continuously differentiable functions $\f: \Theta \rightarrow {\mathbb R}^n$.

Note that the OBB differs from the Bayesian CRB of Van Trees \cite{vantrees68}. Van Trees' result is based on applying the Cauchy--Schwarz inequality to the joint pdf $p_{\x,\bth}$, whereas the deterministic CRB is based on applying a similar procedure to $p_{\x;\bth}$. As a consequence, the regularity conditions required for the Bayesian CRB are stricter, requiring that $p_{\x,\bth}$ be twice differentiable with respect to $\bth$. By contrast, the OBB requires differentiability only of the conditional pdf $p_{\x|\bth}$. An example in which this difference is important is the case in which the prior distribution $p_\bth$ is discontinuous, e.g., when $p_\bth$ is uniform. The performance of the OBB in this setting will be examined in Section~\ref{se:compare}.

In the next section, we will see that it is advantageous to perform the minimization \eqref{eq:inf} over a somewhat modified class of functions. This will allow us to prove the unique existence of a solution to the optimization problem, a result which will be of use when examining the properties of the bound later in the paper.

\section{Mathematical Safeguards}
\label{se:safeguards}

In the previous section, we saw that a lower bound on the MMSE can be obtained by solving the minimization problem \eqref{eq:inf}. However, at this point, we have no guarantee that the solution $s$ of \eqref{eq:inf} is anywhere near the true value of the MMSE\@. Indeed, at first sight, it may appear that $s=0$ for any estimation setting. To see this, note that $Z[\b]$ is a sum of two components, a bias gradient part and a squared bias part. Both parts are nonnegative, but the former is zero when the bias gradient is $-\I$, while the latter is zero when the bias is zero. No differentiable function $\b$ satisfies these two constraints simultaneously for all $\bth$, since if the squared bias is everywhere zero, then the bias gradient is also zero. However, it is possible to construct a sequence of functions $\b^{(i)}$ for which both the bias gradient and the squared bias norm tend to zero for \emph{almost} every value of $\bth$. An example of such a sequence in a one-dimensional setting is plotted in Fig.~\ref{fig:bias seq}. Here, a sequence $\b^{(i)}$ of smooth, periodic functions is presented. The function period tends to zero, and the percentage of the cycle in which the derivative equals $-1$ increases as $i$ increases. Thus, the pointwise limit of the function sequence is zero almost everywhere, and the pointwise limit of the derivative is $-1$ almost everywhere.

\begin{figure}
\centerline{\includegraphics[scale=0.8]{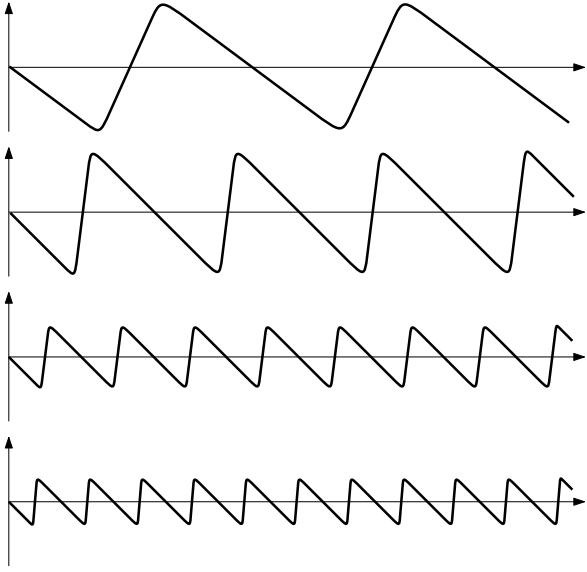}}
\caption{A sequence of continuous functions for which both $|b(\theta)|^2$ and $|1+b'(\theta)|^2$ tend to zero for almost every value of $\theta$.}
\label{fig:bias seq}
\end{figure}

In the specific case shown in Fig.~\ref{fig:bias seq}, it can be shown that the value of $Z[\b^{(i)}]$ does not tend to zero; in fact, $Z[\b^{(i)}]$ tends to infinity in this situation. However, our example illustrates that care must be taken when applying concepts from finite-dimensional optimization problems to variational calculus.

The purpose of this section is to show that $s>0$, so that the bound is meaningful, for any problem setting satisfying the regularity conditions of Section~\ref{ss:det}. (This question was not addressed by Young and Westerberg \cite{young71}.) While doing so, we develop some abstract concepts which will also be used when analyzing the asymptotic properties of the OBB in Section~\ref{se:props}.

As often happens with variational problems, it turns out that the minimum of \eqref{eq:inf} is not necessarily achieved by any continuously differentiable function. In order to guarantee an achievable minimum, one must instead minimize \eqref{eq:inf} over a slightly modified space, which is defined below. As explained in Section~\ref{ss:det}, all bias functions are continuously differentiable, so that the minimizing function ultimately obtained, if it is not differentiable, will not be the bias of any estimator. However, as we will see, the minimum value of our new optimization problem is identical to the infimum of \eqref{eq:inf}. Furthermore, this approach allows us to demonstrate several important theoretical properties of the OBB\@.

Let $\LtTn$ be the space of $p_\bth$-measurable functions $\b: \Theta \rightarrow {\mathbb R}^n$ such that
\beq \label{eq:sq integ}
\int_\Theta \|\b(\bth)\|^2 p_\bth(d\bth) < \infty.
\eeq
Define the associated inner product
\beq
\bra \b^{(1)}, \b^{(2)} \ket_\LtTn \triangleq
\sum_{i=1}^n \int_\Theta b^{(1)}_i(\bth) b^{(2)}_i(\bth) p_\bth(d\bth)
\eeq
and the corresponding norm $\|\b\|_\LtTn^2 \triangleq \bra \b,\b \ket_\LtTn$. Any function $\b \in \LtTn$ has a derivative in the distributional sense, but this derivative might not be a function. For example, discontinuous functions have distributional derivatives which contain a Dirac delta. If, for every $i$, the distributional derivative $\partial b_i/\partial \bth$ of $\b$ is a function in $\LtTn$, then $\b$ is said to be weakly differentiable \cite{lieb01}, and its weak derivative is the matrix function $\partial \b/\partial \bth$. Roughly speaking, a function is weakly differentiable if it is continuous and its derivative exists almost everywhere.

The space of all weakly differentiable functions in $\LtTn$ is called the first-order Sobolev space \cite{lieb01}, and is denoted $\HoTn$. Define an inner product on $\HoTn$ as
\beq \label{eq:HoTn inner prod}
\bra \b^{(1)}, \b^{(2)} \ket_\HoTn \triangleq
\bra \b^{(1)}, \b^{(2)} \ket_\LtTn + \sum_{j=1}^n \bra \pd{b^{(1)}_j}{\bth}, \pd{b^{(2)}_j}{\bth} \ket_\LtTn.
\eeq
The associated norm is $\|\b\|_\HoTn^2 \triangleq \bra \b, \b \ket_\HoTn$. An important property which will be used extensively in our analysis is that $\HoTn$ is a Hilbert space.

Note that since $\Theta$ is an open set, not all functions in $\CoTn$ are in $\HoTn$. For example, in the case $\Theta = \RR^n$, the function $\b(\bth) = \k$, for some nonzero constant $\k$, is continuously differentiable but not integrable. Thus $\b$ is in $\CoTn$ but not in $\HoTn$, nor even in $\LtTn$. However, any measurable function which is not in $\HoTn$ has $\|\b\|_\HoTn = \infty$, meaning that either $\b$ or $\partial \b / \partial \bth$ has infinite $\LtTn$ norm. Consequently, either the bias norm part or the bias gradient part of $Z[\b]$ is infinite. It follows that performing the minimization \eqref{eq:inf} over $\CoTn \cap \HoTn$, rather than over $\CoTn$, does not change the minimum value. On the other hand, $\CoTn \cap \HoTn$ is dense in $\HoTn$, and $Z[\b]$ is continuous, so that minimizing \eqref{eq:inf} over $\HoTn$ rather than $\CoTn \cap \HoTn$ also does not alter the minimum. Consequently, we will henceforth consider the problem
\beq \label{eq:min HoTn}
s = \inf_{\b \in \HoTn} Z[\b].
\eeq

The advantage of including weakly differentiable functions in the minimization is that a unique minimizer can now be guaranteed, as demonstrated by the following result.

\begin{proposition} \label{pr:uniq min}
Consider the problem
\beq
\bar{\b} = \argmin_{\b \in \HoTn} Z[\b]
\eeq
where $Z[\b]$ is given by \eqref{eq:average CRB} and $\J(\bth)$ is positive definite and bounded with probability 1\@. This problem is well-defined, i.e., there exists a unique $\bar{\b} \in \HoTn$ which minimizes $Z[\b]$. Furthermore, the minimum value $s = Z[\bar{\b}]$ is finite and nonzero.
\end{proposition}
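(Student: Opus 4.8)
The plan is to recognize $Z[\b]$ as (essentially) a coercive quadratic functional on the Hilbert space $\HoTn$ and invoke the standard direct method / Lax--Milgram-type machinery. First I would expand $Z[\b]$ using the identity $\Tr((\I+\partial\b/\partial\bth)\J^{-1}(\I+\partial\b/\partial\bth)^T) = \Tr(\J^{-1}) + 2\Tr(\J^{-1}\,\partial\b/\partial\bth) + \Tr((\partial\b/\partial\bth)\J^{-1}(\partial\b/\partial\bth)^T)$ (using symmetry of $\J^{-1}$), and integrate against $p_\bth$. This writes $Z[\b] = a(\b,\b) + 2L(\b) + c$, where $c = \int_\Theta \Tr(\J^{-1}(\bth))\,p_\bth(d\bth)$ is a finite constant (finite because $\J$ is bounded below in the positive-definite order with probability one, so $\J^{-1}$ is bounded above; strictly this needs $\J^{-1}$ integrable, which I would note follows from the boundedness hypothesis, perhaps with an essential bound), $L(\b) = \int_\Theta \Tr(\J^{-1}\,\partial\b/\partial\bth)\,p_\bth(d\bth)$ is a bounded linear functional on $\HoTn$, and $a(\b,\b) = \|\b\|_\LtTn^2 + \int_\Theta \Tr((\partial\b/\partial\bth)\J^{-1}(\partial\b/\partial\bth)^T)\,p_\bth(d\bth)$ is a symmetric bilinear form.

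Next I would establish the two properties that make the direct method work: (i) continuity/boundedness of $a$ and $L$ on $\HoTn$, and (ii) coercivity of $a$. Boundedness of $L$ follows from Cauchy--Schwarz plus the essential bound on $\J^{-1}$: $|L(\b)| \le C\,\|\partial\b/\partial\bth\|_\LtTn \le C\,\|\b\|_\HoTn$. For coercivity, the squared-bias term gives exactly $\|\b\|_\LtTn^2$, and the bias-gradient term is bounded below by $\lambda_{\min}\,\|\partial\b/\partial\bth\|_\LtTn^2$ where $\lambda_{\min}>0$ is an essential lower bound on the eigenvalues of $\J^{-1}$ (equivalently, $\J$ is essentially bounded above); hence $a(\b,\b) \ge \min(1,\lambda_{\min})\,\|\b\|_\HoTn^2$. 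Thus $Z[\b] \ge \min(1,\lambda_{\min})\,\|\b\|_\HoTn^2 - 2\|L\|\,\|\b\|_\HoTn + c$, which is bounded below (completing the square) and tends to $+\infty$ as $\|\b\|_\HoTn \to \infty$. Since $a$ is a bounded, coercive, symmetric bilinear form on the Hilbert space $\HoTn$ and $L$ is bounded linear, the functional $Z[\b] = a(\b,\b)+2L(\b)+c$ has a unique minimizer $\bar\b$ — this is the variational form of Lax--Milgram (or: minimize the equivalent strictly convex coercive functional, using weak lower semicontinuity of $a$ and weak compactness of bounded sets in a Hilbert space to extract a convergent minimizing subsequence, with uniqueness from strict convexity).

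It remains to show $s = Z[\bar\b]$ is finite and nonzero. Finiteness is immediate since $Z[\mathbf 0] = c < \infty$ and $Z$ is bounded below. For $s \ne 0$, i.e.\ $s>0$, I would argue by contradiction: if $s=0$ then, because $Z[\b] \ge \min(1,\lambda_{\min})\|\b\|_\HoTn^2 - 2\|L\|\|\b\|_\HoTn + c$ and the minimizer exists, we would need $\bar\b$ with $a(\bar\b,\bar\b) + 2L(\bar\b) + c = 0$; but a cleaner route is to observe that at the minimizer the first variation vanishes, giving the weak Euler--Lagrange equation $a(\bar\b,\v) + L(\v) = 0$ for all $\v\in\HoTn$, so $Z[\bar\b] = a(\bar\b,\bar\b)+2L(\bar\b)+c = -a(\bar\b,\bar\b)+c = c - a(\bar\b,\bar\b)$. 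Since $a(\bar\b,\bar\b) \le$ (via the EL equation and boundedness) $\|L\|\,\|\bar\b\|_\HoTn$ and $\|\bar\b\|_\HoTn \le \|L\|/\min(1,\lambda_{\min})$, we get $a(\bar\b,\bar\b) \le \|L\|^2/\min(1,\lambda_{\min})$; I would then need this to be strictly less than $c$. If the clean inequality does not hold in general, the fallback is the almost-everywhere argument foreshadowed in the text: $s=0$ would force a minimizing sequence $\b^{(i)}$ with $\|\b^{(i)}\|_\LtTn \to 0$ and $\int \Tr((\partial\b^{(i)}/\partial\bth + \I - \I)\cdots)$ controlled, and one shows that $\partial\b^{(i)}/\partial\bth \to -\I$ in a weak sense is incompatible with $\b^{(i)} \to 0$ weakly in $\HoTn$ together with weak lower semicontinuity — concretely, $\partial b^{(i)}_k/\partial\theta_k \rightharpoonup 0$ (since $b^{(i)}\rightharpoonup 0$ in $\HoTn$) contradicts $\partial b^{(i)}_k/\partial\theta_k \to -1$. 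The main obstacle I anticipate is precisely this last point: converting "bias and bias-gradient cannot both vanish" into a quantitative positive lower bound $s>0$, since the naive pointwise reasoning fails (as Fig.~\ref{fig:bias seq} shows) and one genuinely needs the Hilbert-space structure — weak lower semicontinuity of the $\HoTn$-coercive part — to rule out the oscillating escape sequence. Everything else (the bilinear-form manipulations, boundedness, coercivity) is routine once the essential bounds on $\J$ are in hand.
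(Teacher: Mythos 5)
Your proposal is correct in substance and uses the same underlying machinery as the paper --- the direct method in the Hilbert space $\HoTn$ --- but packages it differently. The paper (Lemma~\ref{le:uniq min gen} plus Appendix~B) does not expand the square: it proves weak lower semicontinuity of the squared-bias and bias-gradient terms by hand via Cauchy--Schwarz, uses Banach--Alaoglu after first showing (via $Z[{\bf 0}]=\int_\Theta\Tr(\J^{-1})p_\bth(d\bth)$ and the triangle-inequality Lemma~\ref{le:triangle}) that the minimization \eqref{eq:min HoTn} may be restricted to a closed, bounded, convex set, and gets uniqueness from strict convexity of the squared-bias part. Your decomposition $Z[\b]=a(\b,\b)+2L(\b)+c$ with a coercive symmetric form and a Lax--Milgram-type conclusion is a cleaner, more off-the-shelf route to the same existence/uniqueness statement; the coercivity constant $\min(1,\lambda_{\min}(\J^{-1}))$ is exactly the paper's bound $\lambda_{\min}(\J^{-1})\ge 1/K$ coming from boundedness of $\J$. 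Both your argument and the paper's are equally (im)precise about needing $\Tr(\J^{-1})$ integrable (and, for boundedness of $a$ and $L$ on all of $\HoTn$, essentially bounded $\J^{-1}$), so that is not a gap relative to the paper. Where you genuinely diverge --- and overcomplicate --- is the positivity $s>0$: your primary route via the Euler--Lagrange identity $Z[\bar\b]=c-a(\bar\b,\bar\b)$ together with the estimate $a(\bar\b,\bar\b)\le\|L\|^2/\min(1,\lambda_{\min})$ does not in general beat $c$, as you yourself concede, and your fallback (weak limits of a minimizing sequence would force the zero function to have weak derivative $-\I$) is valid but laborious. The paper's point is that once the minimizer $\bar\b$ is known to exist, positivity is a one-liner: $s=0$ would force $\|\bar\b\|_\LtTn=0$, hence $\bar\b={\bf 0}$ a.e., but $Z[{\bf 0}]=\int_\Theta\Tr(\J^{-1}(\bth))\,p_\bth(d\bth)>0$, a contradiction; your EL identity in fact reduces to the same observation if you note that $a(\bar\b,\bar\b)=c$ combined with $Z[\bar\b]\ge\|\bar\b\|_\LtTn^2$ forces $\bar\b={\bf 0}$.
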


Proving the unique existence of a minimizer for \eqref{eq:min HoTn} is a technical exercise in functional analysis which can be found in Appendix~\ref{ap:prf pr:uniq min}. However, once the existence of such a minimizer is demonstrated, it is not difficult to see that $0 < s < \infty$. To see that $s < \infty$, we must find a function $\b$ for which $Z[\b] < \infty$. One such function is $\b = {\bf 0}$, for which $Z[\b]$ is finite since $\J(\bth)$ is bounded. Now suppose by contradiction that $s = 0$, which implies that there exists a function $\bar{\b} \in \HoTn$ such that $Z[\bar{\b}]=0$. Therefore, both the bias gradient and the squared bias parts of $Z[\bar{\b}]$ are zero. In particular, since the squared bias part equals zero, we have $\|\bar{\b}\|_\LtTn = 0$. Hence, $\bar{\b}={\bf 0}$, because $\LtTn$ is a normed space. But then, by the definition \eqref{eq:average CRB} of $Z[\cdot]$,
\beq
Z[\bar{\b}] = \int_\Theta \Tr(\J^{-1}(\bth)) p_\bth(d\bth)
\eeq
which is positive; this is a contradiction.

Note that functions in $\HoTn$ are defined up to changes on a set having zero measure. In particular, the fact that $\b^{(0)}$ is unique does not preclude functions which are identical to $\b^{(0)}$ almost everywhere (which obviously have the same value $Z[\b]$).

Summarizing the discussion of the last two sections, we have the following theorem.

\begin{theorem} \label{th:biased bayesian CRB}
Let $\bth$ be an unknown random vector with pdf $p_\bth(\bth)>0$ over the open set $\Theta \subseteq {\mathbb R}^n$, and let $\x$ be a measurement vector whose pdf, conditioned on $\bth$, is given by $p_{\x|\bth}(\x|\bth)$. Assume the regularity conditions of Section~\ref{ss:det} hold. Then, for any estimator $\hbth$,
\beq \label{eq:th:biased bayesian CRB}
\E{\|\bth-\hbth\|^2} \ge \min_{\b \in \HoTn} \int_\Theta
\CRB[\b,\bth] p_\bth(\bth) d\bth.
\eeq
The minimum in \eqref{eq:th:biased bayesian CRB} is nonzero and finite. Furthermore, this minimum is achieved by a function $\bar{\b} \in \HoTn$, which is unique up to changes having zero probability.
\end{theorem}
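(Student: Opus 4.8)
Proof proposal (sketch). The statement is a consolidation of the development in Sections~\ref{ss:bayes-det}--\ref{se:safeguards}, so the plan is to assemble the pieces in order rather than prove anything new. First I would recall that, under the regularity conditions of Section~\ref{ss:det} and with $\Theta$ open as assumed here, the deterministic bound \eqref{eq:CRB} holds for every estimator $\hbth$ with bias function $\b(\bth)$, namely $\E{\|\bth-\hbth\|^2;\bth} \ge \CRB[\b,\bth]$. Taking the $p_\bth$-expectation of both sides and using the Bayesian--deterministic identity \eqref{eq:bayes-det} yields \eqref{eq:average CRB}, i.e.\ $\E{\|\bth-\hbth\|^2} \ge Z[\b]$ for \emph{any} estimator. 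This already reduces the task to lower-bounding $Z[\b]$ over the set of admissible bias functions.

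Second, I would invoke \cite[Lemma~2]{young71} to conclude that, under these same regularity conditions, the bias of every estimator is continuously differentiable; hence the quantity that actually bounds the Bayesian MSE from below is $\inf_{\b\in\CoTn} Z[\b]$. The next step is to show this equals $s = \inf_{\b\in\HoTn} Z[\b]$. For this I would use the two observations recorded just before \eqref{eq:min HoTn}: (i) any measurable $\b\notin\HoTn$ has $\|\b\|_\HoTn=\infty$, so that either the squared-bias part or the bias-gradient part of $Z[\b]$ is infinite, whence restricting from $\CoTn$ to $\CoTn\cap\HoTn$ does not change the infimum; and (ii) $\CoTn\cap\HoTn$ is dense in the Hilbert space $\HoTn$ and $Z$ is continuous there, so passing from $\CoTn\cap\HoTn$ to $\HoTn$ does not change the infimum either. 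Combining, $\E{\|\bth-\hbth\|^2}\ge s$ for every estimator, which is \eqref{eq:th:biased bayesian CRB} with ``$\min$'' replaced by ``$\inf$''.

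Finally, I would appeal to Proposition~\ref{pr:uniq min}: since $\J(\bth)$ is bounded and positive definite with probability~$1$, the minimization of $Z$ over $\HoTn$ admits a unique minimizer $\bar{\b}\in\HoTn$, and the value $s=Z[\bar{\b}]$ is finite and nonzero. This both justifies writing ``$\min$'' in \eqref{eq:th:biased bayesian CRB} and supplies the last two sentences of the theorem; the uniqueness ``up to changes having zero probability'' is automatic because elements of $\HoTn$ are equivalence classes of $p_\bth$-a.e.\ equal functions.

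The only genuinely delicate point is the passage from $\CoTn$ to $\HoTn$ in the second step: one must make sure that enlarging the feasible set does not drop the infimum below a legitimate lower bound on the MSE. This is exactly what observation~(i) rules out — the functions in $\HoTn\setminus\CoTn$ that could conceivably lower the infimum (those with finite $Z$) are limits of $\CoTn\cap\HoTn$ functions, so the value is preserved — while the remainder of the argument is the density-plus-continuity statement already established in Section~\ref{se:safeguards} together with the functional-analytic content of Proposition~\ref{pr:uniq min}, which I would take as given.
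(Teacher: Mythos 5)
Your proposal is correct and follows essentially the same route as the paper: the paper's Theorem~\ref{th:biased bayesian CRB} is explicitly a summary of Sections~\ref{ss:bayes-det}--\ref{se:safeguards}, i.e.\ averaging the deterministic CRB via \eqref{eq:bayes-det}, noting all bias functions lie in $\CoTn$, passing from $\CoTn$ to $\HoTn$ by the same two observations preceding \eqref{eq:min HoTn}, and invoking Proposition~\ref{pr:uniq min} for existence, uniqueness, and $0<s<\infty$. The only minor quibble is your framing of the ``delicate point'': for the validity of the lower bound it suffices that $\inf_{\HoTn} Z \le \inf_{\CoTn} Z$ (immediate from observation~(i)), with density plus continuity needed only to ensure the bound is not thereby weakened.
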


Two remarks are in order concerning Theorem~\ref{th:biased bayesian CRB}. First, the function $\b$ solving \eqref{eq:th:biased bayesian CRB} might not be the bias of any estimator; indeed, under our assumptions, all bias functions are continuously differentiable, whereas $\b$ need only be weakly differentiable. Nevertheless, \eqref{eq:th:biased bayesian CRB} is still a lower bound on the MMSE\@. Another important observation is that Theorem~\ref{th:biased bayesian CRB} arises from the deterministic CRB; hence, there are no requirements on the prior distribution $p_\bth(\bth)$. In particular, $p_\bth(\bth)$ can be discontinuous or have bounded support. By contrast, many previous Bayesian bounds do not apply in such circumstances.

\section{Calculating the Bound}
\label{se:calc}

In finite-dimensional convex optimization problems, the requirement of a vanishing first derivative results in a set of equations, whose solution is the global minimum. Analogously, in the case of convex functional optimization problems such as \eqref{eq:th:biased bayesian CRB}, the optimum is given by the solution of a set of  differential equations. The following theorem, whose proof can be found in Appendix~\ref{ap:prf th:e-l}, specifies the differential equation relevant to our optimization problem.

In this section and in the remainder of the paper, we will consider the case in which the set $\Theta = \{ \bth: p_\bth(\bth)>0 \}$ is bounded. From a practical point of view, even when $\Theta$ consists of the entire set $\RR^n$, it can be approximated by a bounded set containing only those values of $\bth$ for which $p_\bth(\bth) > \epsilon$.

\begin{theorem} \label{TH:E-L}
Under the conditions of Theorem~\ref{th:biased bayesian CRB}, suppose $\Theta$ is a bounded subset of ${\mathbb R}^n$ with a smooth boundary $\Lambda$. Then, the optimal $\b(\bth)$ of \eqref{eq:th:biased bayesian CRB} is given by the solution to the system of partial differential equations
\begin{align} \label{eq:th:E-L}
&p_\bth(\bth) b_i(\bth)
= p_\bth(\bth) \sum_{j,k} \pdd{b_i}{\theta_j}{\theta_k} (\J^{-1})_{jk} \notag\\
&\ + \sum_{j,k} \left(\delta_{ik} + \pd{b_i}{\theta_k}\right)
  \left( (\J^{-1})_{jk} \pd{p_\bth}{\theta_j} + p_\bth(\bth) \pd{(\J^{-1})_{jk}}{\theta_j} \right)
\end{align}
for $i=1,\ldots n$, within the range $\bth \in \Theta$, which satisfies the Neumann boundary condition
\beq \label{eq:th:E-L:boundary}
\left(\I + \pd{\b}{\bth}\right) \J^{-1} \bnu(\bth) = {\bf 0}
\eeq
for all points $\bth \in \Lambda$. Here, $\bnu(\bth)$ is a normal to the boundary at $\bth$. All derivatives in this system of equations are to be interpreted in the weak sense.
\end{theorem}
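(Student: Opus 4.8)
The plan is to derive \eqref{eq:th:E-L} and \eqref{eq:th:E-L:boundary} as the Euler--Lagrange equations of the convex variational problem \eqref{eq:th:biased bayesian CRB}. Since Proposition~\ref{pr:uniq min} already guarantees that a unique minimizer $\bar\b \in \HoTn$ exists, it suffices to characterize it via a first-order optimality condition. First I would write the functional as $Z[\b] = \int_\Theta F(\bth, \b(\bth), \partial\b/\partial\bth)\, p_\bth(d\bth)$, where $F = \|\b\|^2 + \Tr((\I+\partial\b/\partial\bth)\J^{-1}(\I+\partial\b/\partial\bth)^T)$ is a quadratic, hence strictly convex, function of $(\b, \partial\b/\partial\bth)$. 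For any test function $\h \in \HoTn$ (or, to avoid boundary subtleties in the first pass, any $\h \in \CoTn$ with compact support in $\Theta$), convexity and differentiability give that $\bar\b$ minimizes $Z$ if and only if the Gateaux derivative vanishes:
\beq \label{eq:gateaux}
\left. \frac{d}{d\epsilon} Z[\bar\b + \epsilon\h] \right|_{\epsilon=0} = 0 \quad \text{for all admissible } \h.
\eeq

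Next I would compute this derivative explicitly. Differentiating under the integral sign (justified because $F$ is quadratic and $\bar\b,\h \in \HoTn$, so all the relevant products are $p_\bth$-integrable), the $\|\b\|^2$ term contributes $2\int_\Theta \bar\b^T \h\, p_\bth(d\bth)$, and the trace term contributes $2\int_\Theta \Tr((\I+\partial\bar\b/\partial\bth)\J^{-1} (\partial\h/\partial\bth)^T)\, p_\bth(d\bth)$. Writing these in components and setting the sum to zero yields the weak form
\beq \label{eq:weakform}
\int_\Theta \sum_i \left[ p_\bth b_i h_i - \sum_{j,k} p_\bth \left(\delta_{ik} + \pd{b_i}{\theta_k}\right)(\J^{-1})_{jk} \pd{h_i}{\theta_j} \right] d\bth = 0.
\eeq
To extract the strong form \eqref{eq:th:E-L}, I would integrate the gradient term by parts, moving the derivative off $h_i$. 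Since $p_\bth$, $\J^{-1}$, and $\partial\bar\b/\partial\bth$ all multiply $\partial h_i/\partial\theta_j$, the divergence theorem produces a bulk term $-\int_\Theta h_i \pd{}{\theta_j}\!\left[p_\bth(\delta_{ik}+\pd{b_i}{\theta_k})(\J^{-1})_{jk}\right] d\bth$ together with a boundary term $\int_\Lambda h_i (\delta_{ik}+\pd{b_i}{\theta_k})(\J^{-1})_{jk}\, p_\bth\, \nu_j\, dS$. Expanding the bulk derivative by the product rule — differentiating the three factors $p_\bth$, $(\I+\partial\b/\partial\bth)$, and $\J^{-1}$ in turn — and collecting terms gives exactly the three groups appearing on the right side of \eqref{eq:th:E-L}: the $p_\bth \sum \pdd{b_i}{\theta_j}{\theta_k}(\J^{-1})_{jk}$ term from differentiating the bias gradient, and the two $\pd{p_\bth}{\theta_j}$ and $\pd{(\J^{-1})_{jk}}{\theta_j}$ terms from differentiating the other two factors. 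Since $\h$ is arbitrary (first with compact support, forcing \eqref{eq:th:E-L} a.e.\ in $\Theta$, then general in $\HoTn$, forcing the boundary integrand to vanish), we obtain \eqref{eq:th:E-L} and the Neumann condition \eqref{eq:th:E-L:boundary}; note $p_\bth$ can be cancelled from the boundary term provided it is nonzero up to $\Lambda$, which is the reason for taking $\Theta$ bounded with $p_\bth > 0$ on its closure, and the $\J^{-1}$ factor can be dropped by positive definiteness. All derivatives are weak because $\bar\b$ is only in $\HoTn$; this is handled by keeping the identity in the integrated-by-parts (weak) form throughout and only speaking of pointwise equality modulo null sets.

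The main obstacle I anticipate is the rigorous justification of the integration by parts and of the boundary condition when $\bar\b$ is merely weakly differentiable and $p_\bth$ may itself be nonsmooth. Concretely, the divergence theorem as used above requires the vector field $p_\bth(\delta_{ik}+\pd{b_i}{\theta_k})(\J^{-1})_{jk}$ to have enough regularity for a trace on $\Lambda$ to make sense; this needs the Sobolev trace theorem applied on the bounded domain $\Theta$ with smooth boundary $\Lambda$, plus the assumed smoothness of $\J$ and enough regularity of $p_\bth$. A clean way around the worst of this is to first prove \eqref{eq:weakform} holds for all $\h \in \HoTn$ directly (this is just the vanishing Gateaux derivative of a convex functional and needs no integration by parts), then define the PDE \eqref{eq:th:E-L} and boundary condition \eqref{eq:th:E-L:boundary} to be precisely the statement that \eqref{eq:weakform} holds — i.e., present \eqref{eq:th:E-L} as the formal strong form whose rigorous meaning is the weak identity — which is exactly what the closing sentence of the theorem ("all derivatives are to be interpreted in the weak sense") licenses. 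The secondary obstacle is bookkeeping: correctly tracking the three-factor product rule and the index contractions so that the coefficients in \eqref{eq:th:E-L} come out exactly as stated, including the symmetry $(\J^{-1})_{jk}=(\J^{-1})_{kj}$ used implicitly when pairing $\pd{b_i}{\theta_k}$ with $\pd{h_i}{\theta_j}$.
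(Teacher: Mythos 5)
Your proposal is correct and follows essentially the same route as the paper: the paper also characterizes the minimizer of the convex functional by a vanishing first variation, invoking the standard Gelfand--Fomin formula for $\delta Z[\h]$ (bulk Euler--Lagrange term plus boundary term), taking first $\h$ vanishing on $\Lambda$ and then general $\h$, and finally substituting $F=\CRB[\b,\bth]\,p_\bth(\bth)$ to obtain \eqref{eq:th:E-L} and \eqref{eq:th:E-L:boundary}; you simply derive that variation formula explicitly via the Gateaux derivative and integration by parts, with some extra (welcome) discussion of the weak-form interpretation. Two minor slips that do not affect the conclusion: the gradient term in your displayed weak form should carry a $+$ sign rather than $-$ (your subsequent integration by parts and the resulting signs in \eqref{eq:th:E-L} are consistent with the correct sign, so this is only a typo), and the parenthetical claim that the $\J^{-1}$ factor "can be dropped by positive definiteness" from the boundary condition is false in general, since $\left(\I+\pd{\b}{\bth}\right)\J^{-1}\bnu={\bf 0}$ is not equivalent to $\left(\I+\pd{\b}{\bth}\right)\bnu={\bf 0}$ unless $\J^{-1}\bnu$ is parallel to $\bnu$---but the theorem keeps $\J^{-1}$, so the condition you actually derive is exactly \eqref{eq:th:E-L:boundary}.
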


Note that Theorem~\ref{th:biased bayesian CRB} guarantees the existence of a unique solution in $\HoTn$ to the differential equation \eqref{eq:th:E-L} with the boundary conditions \eqref{eq:th:E-L:boundary}.

The bound of Young and Westerberg \cite{young71} is a special case of Theorem~\ref{TH:E-L}, and is given here for completeness.

\begin{corollary} \label{co:E-L 1d}
Under the settings of Theorem~\ref{th:biased bayesian CRB}, suppose $\Theta = (\theta_0,\theta_1)$ is a bounded interval in ${\mathbb R}$. Then, the bias function $b(\theta)$ minimizing \eqref{eq:th:biased bayesian CRB} is a solution to the second-order ordinary differential equation
\beq \label{eq:co:E-L 1d}
J(\theta) b(\theta)
= b''(\theta) + (1+b'(\theta))\left( \frac{d \log p_\bth}{d \theta} - \frac{d \log J}{d \theta} \right)
\eeq
within the range $\theta \in \Theta$, subject to the boundary conditions $b'(\theta_0) = b'(\theta_1) = -1$.
\end{corollary}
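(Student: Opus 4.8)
The plan is to obtain Corollary~\ref{co:E-L 1d} directly as the $n=1$ specialization of Theorem~\ref{TH:E-L}. First I would set $n=1$: the Fisher information matrix $\J(\bth)$ collapses to the positive scalar $J(\theta)$, so $\J^{-1}=1/J(\theta)$; the bias gradient $\partial\b/\partial\bth$ becomes the scalar $b'(\theta)$; the Kronecker symbol $\delta_{ik}$ becomes $1$; and every sum over the indices $j,k$ in \eqref{eq:th:E-L} contains only the single term $j=k=1$.

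Second, I would carry out this substitution in \eqref{eq:th:E-L}, which, after dropping the redundant subscript $i$, reads
\[
p_\bth(\theta)\, b(\theta) = p_\bth(\theta)\,\frac{b''(\theta)}{J(\theta)} + \bigl(1+b'(\theta)\bigr)\left( \frac{p_\bth'(\theta)}{J(\theta)} + p_\bth(\theta)\,\frac{d}{d\theta}\!\left(\frac{1}{J(\theta)}\right) \right).
\]
Then I would use $\frac{d}{d\theta}(1/J)=-J'/J^2$, factor $p_\bth/J$ out of the last parenthesis to obtain $(1+b')\frac{p_\bth}{J}\bigl(p_\bth'/p_\bth - J'/J\bigr)=(1+b')\frac{p_\bth}{J}\bigl(\frac{d\log p_\bth}{d\theta}-\frac{d\log J}{d\theta}\bigr)$, and finally multiply the whole equation by $J(\theta)/p_\bth(\theta)$. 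This last step is legitimate because $p_\bth(\theta)>0$ on $\Theta$ by \eqref{eq:def Theta} and $J(\theta)>0$ by the positive-definiteness assumption in the regularity conditions of Section~\ref{ss:det}. The result is exactly \eqref{eq:co:E-L 1d}.

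Third, for the boundary condition I would note that the boundary $\Lambda$ of the interval $(\theta_0,\theta_1)$ is the two-point set $\{\theta_0,\theta_1\}$, with outward normal $\bnu$ equal to the nonzero scalar $-1$ at $\theta_0$ and $+1$ at $\theta_1$. Substituting $n=1$ into \eqref{eq:th:E-L:boundary} gives $\bigl(1+b'(\theta)\bigr)\frac{1}{J(\theta)}(\pm 1)=0$ at each endpoint, and since $J(\theta)$ is finite and nonzero this forces $1+b'(\theta)=0$, i.e.\ $b'(\theta_0)=b'(\theta_1)=-1$. Existence and uniqueness of a solution in $\HoTn$ is inherited directly from Theorem~\ref{th:biased bayesian CRB}.

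The only delicate point is whether the hypothesis of Theorem~\ref{TH:E-L} that $\Theta$ have a ``smooth boundary $\Lambda$'' is intended to cover the degenerate, zero-dimensional boundary of an interval; I expect this to be the main (and only) obstacle, and it is a mild one. If one prefers not to invoke Theorem~\ref{TH:E-L} as a black box here, the Euler--Lagrange computation of Appendix~\ref{ap:prf th:e-l} can be redone in one dimension: integrating $\int_{\theta_0}^{\theta_1}(1+b')\frac{1}{J}\eta'\,d\theta$ by parts against a test function $\eta$ produces the interior equation \eqref{eq:co:E-L 1d} from the bulk term and the conditions $b'(\theta_0)=b'(\theta_1)=-1$ from the boundary term $\bigl[(1+b')\frac{1}{J}\,\eta\bigr]_{\theta_0}^{\theta_1}$, using that $\eta$ need not vanish at the endpoints. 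Either route is routine once the index reduction above is in hand.
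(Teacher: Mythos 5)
Your proposal is correct and matches the paper's treatment: the paper presents Corollary~\ref{co:E-L 1d} precisely as the $n=1$ specialization of Theorem~\ref{TH:E-L} (it is introduced with the remark that the Young--Westerberg bound "is a special case of Theorem~\ref{TH:E-L}"), and your index reduction, the factoring of $p_\bth/J$ using $p_\bth>0$ and $J>0$, and the endpoint reading of \eqref{eq:th:E-L:boundary} reproduce exactly that derivation. Your side remark about the degenerate boundary and the optional direct one-dimensional Euler--Lagrange computation is a sensible safeguard but not needed beyond what the paper itself assumes.
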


Theorem~\ref{TH:E-L} can be solved numerically, thus obtaining a bound for any problem satisfying the regularity conditions. However, directly solving \eqref{eq:th:E-L} becomes increasingly complex as the dimension of the problem increases. Instead, in many cases, symmetry relations in the problem can be used to simplify the solution. As an example, the following spherically symmetric case can be reduced to a one-dimensional setting equivalent to that of Corollary~\ref{co:E-L 1d}. The proof of this theorem can be found in Appendix~\ref{ap:prf th:sph sym}.

\begin{theorem} \label{TH:SPH SYM}
Under the setting of Theorem~\ref{th:biased bayesian CRB}, suppose that $\Theta = \{\bth: \|\bth\| < r \}$ is a sphere centered on the origin, $p_\bth(\bth) = q(\|\bth\|)$ is spherically symmetric, and $\J(\bth) = J(\|\bth\|) \I$, where $J: \RR \rightarrow \RR$ is a scalar function. Then, the optimal-bias bound \eqref{eq:th:biased bayesian CRB} is given by
\begin{align} \label{eq:th:sph sym}
\E{\|\bth-\hbth\|^2}
&\ge \frac{2 \pi^{n/2}}{\Gamma(n/2)}
\int_0^r \Bigg[ b^2(\rho) + \frac{(1 + b'(\rho))^2}{J(\rho)} \notag\\
&\quad {}+ \frac{n-1}{J(\rho)} \left(1 + \frac{b(\rho)}{\rho}\right)^2 \Bigg] q(\rho) \rho^{n-1} d\rho.
\end{align}
Here, $\Gamma(\cdot)$ is the Gamma function, and $b(\rho)$ is a solution to the ODE
\begin{align} \label{eq:th:sph sym ode}
J(\theta) b(\theta)
 = b''(\theta)
 &+ (n-1)\left(\frac{b'(\theta)}{\theta} - \frac{b(\theta)}{\theta^2}\right)\notag\\
 &+ (1+b'(\theta))\left( \frac{d \log q}{d \theta} - \frac{d \log J}{d \theta} \right)
\end{align}
subject to the boundary conditions $b(0)=0$, $b'(r)=-1$. The bias function for which the bound is achieved is given by
\beq
\b(\bth) = b(\|\bth\|) \frac{\bth}{\|\bth\|}.
\eeq
\end{theorem}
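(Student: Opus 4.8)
The plan is to use the rotational symmetry of the problem, together with the uniqueness of the minimizer established in Proposition~\ref{pr:uniq min}, to reduce the $n$-dimensional variational problem of Theorem~\ref{th:biased bayesian CRB} to a one-dimensional one, and then to obtain the governing ODE by specializing Theorem~\ref{TH:E-L}. For $\U \in O(n)$ and $\b \in \HoTn$, set $\b_\U(\bth) \triangleq \U^T \b(\U\bth)$; then $\partial \b_\U / \partial \bth\,(\bth) = \U^T (\partial \b/\partial\bth)(\U\bth)\, \U$, and because $\J^{-1}(\bth) = J^{-1}(\|\bth\|)\I$ is scalar and rotation-invariant and $p_\bth$ depends on $\bth$ only through $\|\bth\|$, the unit-Jacobian change of variables $\bth \mapsto \U\bth$ (which preserves $\Theta$) gives $Z[\b_\U] = Z[\b]$. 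Hence $\bar\b_\U$ is again a minimizer, so by uniqueness $\bar\b(\U\bth) = \U\bar\b(\bth)$ for every $\U \in O(n)$, $p_\bth$-a.e. Letting $\U$ range over the stabilizer of a fixed $\bth \neq {\bf 0}$ forces $\bar\b(\bth)$ to be fixed by the $O(n-1)$ acting on $\bth^\perp$, hence parallel to $\bth$; letting $\U$ act transitively on each sphere shows the proportionality factor depends only on $\|\bth\|$. Thus $\bar\b(\bth) = b(\|\bth\|)\,\bth/\|\bth\|$ for a scalar $b$, which is exactly the claimed form of the optimal bias. (For $n=1$ the same conclusion, including $b(0)=0$, follows from the reflection $\theta\mapsto-\theta$, which forces $b$ to be odd.)

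\textbf{Step 2 (reduction of the functional).} I would next substitute this ansatz into $Z[\b]$. Writing $\rho = \|\bth\|$ and $\unitradvec = \bth/\rho$, one finds $\partial\b/\partial\bth = b'(\rho)\,\unitradvec\unitradvec^T + (b(\rho)/\rho)(\I - \unitradvec\unitradvec^T)$, a symmetric matrix with eigenvalue $b'(\rho)$ along $\unitradvec$ and $b(\rho)/\rho$ with multiplicity $n-1$ on $\unitradvec^\perp$. Since $\J^{-1} = J^{-1}(\rho)\I$, this gives
\begin{equation*}
\CRB[\b,\bth] = b^2(\rho) + \frac{(1+b'(\rho))^2}{J(\rho)} + \frac{n-1}{J(\rho)}\left(1 + \frac{b(\rho)}{\rho}\right)^2 ,
\end{equation*}
depending on $\bth$ only through $\rho$. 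Integrating in spherical coordinates, with the surface area $2\pi^{n/2}/\Gamma(n/2)$ of the unit sphere in $\RR^n$, turns $\int_\Theta \CRB[\b,\bth]\,p_\bth(\bth)\,d\bth$ into the one-dimensional integral appearing in \eqref{eq:th:sph sym}. Since radial fields lie in $\HoTn$ and, by Step~1, the minimizer is radial, minimizing this one-dimensional functional over the admissible profiles $b$ produces exactly the bound \eqref{eq:th:biased bayesian CRB}.

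\textbf{Step 3 (the ODE and boundary conditions).} Rather than redo the variational calculus, I would insert the radial ansatz into the Euler--Lagrange system \eqref{eq:th:E-L} and the Neumann condition \eqref{eq:th:E-L:boundary}. Using the decomposition of $\partial\b/\partial\bth$ above and the identity $\Delta\big(b(\rho)\unitradvec\big) = \big( b'' + \tfrac{n-1}{\rho}b' - \tfrac{n-1}{\rho^2}b \big)\unitradvec$ for the vector Laplacian of a radial field, and collecting the first-order terms through $h(\rho)\triangleq q(\rho)/J(\rho)$ (so that $h'/h = d\log q/d\rho - d\log J/d\rho$), the system \eqref{eq:th:E-L} reduces to a vector equation proportional to $\unitradvec$; equating scalar coefficients and dividing by $q(\rho)/J(\rho)$ yields the single ODE \eqref{eq:th:sph sym ode}. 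On the outer boundary $\|\bth\|=r$ the outward normal is $\unitradvec$ and $(\I+\partial\b/\partial\bth)\J^{-1}\unitradvec = J^{-1}(r)(1+b'(r))\unitradvec$, so \eqref{eq:th:E-L:boundary} becomes $b'(r) = -1$. At the interior point $\rho = 0$ the reduced ODE has a regular singular point with indicial exponents $1$ and $1-n$; the exponent-$(1-n)$ branch grows like $\rho^{1-n}$, which is not square-integrable against $\rho^{n-1}q(\rho)\,d\rho$ for $n\ge2$, so the minimizer --- being in $\HoTn$ --- must be the regular branch, forcing $b(0)=0$. Uniqueness of the resulting boundary-value problem is inherited from Proposition~\ref{pr:uniq min}.

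\textbf{Main obstacle.} The matrix identities and the passage to spherical coordinates in Steps 2--3 are routine. The two delicate points are: (i) making Step~1 rigorous --- going from ``$Z$ is $O(n)$-invariant with a unique minimizer'' to the pointwise equivariance and hence radial form of $\bar\b$, keeping in mind that $\bar\b$ is determined only up to $p_\bth$-null sets; and (ii) justifying the condition at the origin, which, unlike the outer boundary, is not covered by Theorem~\ref{TH:E-L} and must instead be extracted from an indicial analysis of the singular ODE (or, when $n=1$, from reflection symmetry). I expect (ii) to be the real crux, since one must show that mere finiteness of the $\HoTn$ norm already rules out the singular branch, rather than imposing $b(0)=0$ by hand.
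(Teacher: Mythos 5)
Your proposal is correct and follows essentially the same route as the paper: symmetry of $Z[\cdot]$ plus uniqueness of the minimizer (Proposition~\ref{pr:uniq min}) to force the radial form $\b(\bth)=b(\|\bth\|)\bth/\|\bth\|$, substitution of this ansatz into the Euler--Lagrange system \eqref{eq:th:E-L} (evaluated along a radial direction) to obtain the ODE, the Neumann condition \eqref{eq:th:E-L:boundary} at $\|\bth\|=r$ to get $b'(r)=-1$, and integration in spherical coordinates of the rotation-invariant integrand to get \eqref{eq:th:sph sym}. The one place you diverge is the condition $b(0)=0$, which you flag as ``the real crux'' and obtain from an indicial analysis at the regular singular point $\rho=0$ (exponents $1$ and $1-n$, with the singular branch excluded by square-integrability against $\rho^{n-1}q(\rho)\,d\rho$, plus an oddness argument for $n=1$); this is valid, but the paper disposes of it in one line: rotation invariance of the optimal bias already forces $\b({\bf 0})={\bf 0}$, hence $b(0)=0$, so no ODE asymptotics are needed. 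Your Step~1 equivariance/stabilizer argument is a slightly cleaner formulation of the paper's rotation-and-reflection contradiction argument, and carries the same (acknowledged) almost-everywhere caveat as the original.
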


In this theorem, the requirement $\J(\bth) = J(\|\bth\|) \I$ indicates that the Fisher information matrix is diagonal and that its components are spherically symmetric. Parameters having a diagonal matrix $\J$ are sometimes referred to as \emph{orthogonal}. The simplest case of orthogonality occurs when, to each parameter $\theta_i$, there corresponds a measurement $x_i$, in such a way that the random variables $x_i|\bth$ are independent. Other orthogonal scenarios can often be constructed by an appropriate parametrization \cite{cox87}.

The requirement that $\J$ have spherically symmetric components occurs, for example, in location problems, i.e., situations in which the measurements have the form $\x = \bth + \w$, where $\w$ is additive noise which is independent of $\bth$. Indeed, under such conditions, $\J$ is constant in $\bth$ \cite[\S 3.1.3]{shao03}. If, in addition, the noise components are independent, then this setting also satisfies the orthogonality requirement, and thus application of Theorem~\ref{TH:SPH SYM} is appropriate. Note that this estimation problem is not separable, since the components of $\bth$ are correlated; thus, the MMSE in this situation is lower than the sum of the components' MMSE\@. An example of such a setting is presented in Section~\ref{se:compare}.

\section{Properties}
\label{se:props}

In this section, we examine several properties of the OBB\@. We first demonstrate that the optimal bias function has zero mean, a property which also characterizes the bias function of the MMSE estimator. Next, we prove that, under very general conditions, the resulting bound is tight at both low and high SNR values. This is an important result, since a desirable property of a Bayesian bound is that it provides an accurate estimate of the ambiguity region between high and low SNR \cite{bell97}. Reliable estimation at the two extremes increases the likelihood that the transition between these two regimes will be correctly identified.

\subsection{Optimal Bias Has Zero Mean}

In any Bayesian estimation problem, the bias of the MMSE estimator $\hbth_{\text{opt}} = \E{\bth|\x}$ has zero mean:
\beq
\E{\hbth_{\text{opt}}} = \E{ \E{\bth|\x} } = \E{\bth}
\eeq
so that
\beq
\E{\b(\hbth_{\text{opt}})} = \E{ \E{\bth|\x} - \bth } = {\bf 0}.
\eeq
Thus, it is interesting to ask whether the optimal bias which minimizes \eqref{eq:th:biased bayesian CRB} also has zero mean. This is indeed the case, as shown by the following theorem.

\begin{theorem}
\label{th:zero mean}
Let $\b(\bth)$ be the solution to \eqref{eq:th:biased bayesian CRB}. Then,
\beq
\E{\b(\bth)} = {\bf 0}.
\eeq
\end{theorem}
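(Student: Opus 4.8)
The plan is to exploit the fact that $\bar{\b}$ minimizes the convex functional $Z[\b]$ over the Hilbert space $\HoTn$, so that its Gateaux derivative in every admissible direction vanishes. I would take as a test direction a \emph{constant} perturbation $\h \in \RR^n$: since $\Theta$ is bounded, the constant function $\bth \mapsto \h$ lies in $\HoTn$, and its weak derivative is the zero matrix. For $t \in \RR$ consider $\phi(t) = Z[\bar{\b} + t\h]$. Because $\bar{\b}$ is the minimizer, $\phi'(0) = 0$. Writing out $Z$ from \eqref{eq:average CRB} and \eqref{eq:CRB}, the bias-gradient term contributes $\Tr[(\I + \partial \bar{\b}/\partial\bth)\J^{-1}(\bth)(\partial (t\h)/\partial\bth)^T] = 0$ since $\partial\h/\partial\bth = {\bf 0}$, so only the squared-bias term $\|\bar{\b}(\bth) + t\h\|^2 = \|\bar{\b}(\bth)\|^2 + 2t\,\bar{\b}(\bth)^T\h + t^2\|\h\|^2$ depends on $t$. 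Differentiating under the integral sign and setting $t=0$ gives
\beq
0 = \phi'(0) = 2 \int_\Theta \bar{\b}(\bth)^T \h \; p_\bth(d\bth) = 2\, \h^T \E{\bar{\b}(\bth)}.
\eeq

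Since this holds for every $\h \in \RR^n$, I conclude $\E{\bar{\b}(\bth)} = {\bf 0}$, which is the claim.

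The one point requiring care is the justification that $\bar{\b} + t\h \in \HoTn$ and that the Gateaux derivative may be computed by differentiating under the integral: boundedness of $\Theta$ makes the constant function square-integrable with respect to $p_\bth$, and the integrand of the squared-bias part is a polynomial in $t$ with $p_\bth$-integrable coefficients (the coefficient $\bar{\b}(\bth)^T\h$ is integrable by Cauchy--Schwarz since $\bar{\b} \in \LtTn$), so dominated convergence applies on a neighborhood of $t=0$. I should also note that convexity of $Z$ (used already in Proposition~\ref{pr:uniq min} and Section~\ref{se:calc}) guarantees that the vanishing of $\phi'(0)$ is both necessary and characterizing, though for this theorem only necessity is needed. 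Alternatively, one can observe that $\h$ is a special case of the Euler--Lagrange analysis of Theorem~\ref{TH:E-L} with the constant test function, but the direct first-variation argument above is cleaner and avoids invoking the boundary term. The main obstacle, such as it is, is purely the measure-theoretic bookkeeping of differentiating under the integral; the algebra itself is immediate once the constant direction is chosen.
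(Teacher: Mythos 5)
Your argument is correct and is essentially the paper's: both exploit the fact that perturbing the minimizer by a constant vector leaves the bias-gradient term of $Z$ unchanged, so only the squared-bias term responds. The paper simply takes the single constant perturbation $-\bmu$ with $\bmu = \E{\b(\bth)}$ and computes $Z[\b-\bmu]-Z[\b] = -\|\bmu\|^2 < 0$ exactly, reaching a contradiction without any first-variation or differentiation-under-the-integral bookkeeping (note also that constants lie in $\HoTn$ simply because $p_\bth$ is a probability measure, so boundedness of $\Theta$ is not needed for that step).
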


\begin{proof}
Assume by contradiction that $\b(\bth)$ has nonzero mean $\E{\b(\bth)} = \bmu \neq {\bf 0}$. Define $\b_0(\bth) \triangleq \b(\bth) - \bmu$. From \eqref{eq:CRB}, we then have
\begin{align}
\CRB[\b_0,\bth] - \CRB[\b,\bth]
 &= \|\b_0(\bth)\|^2 - \|\b(\bth)\|^2 \notag\\
 &= \|\bmu\|^2 - 2 \bmu^T \b(\bth).
\end{align}
Using the functional $Z[\cdot]$ defined in \eqref{eq:average CRB}, we obtain
\begin{align}
Z[\b_0] - Z[\b]
 &= \E{ \|\bmu\|^2 - 2 \bmu^T \b(\bth) } \notag\\
 &= \|\bmu\|^2 - 2 \bmu^T \E{\b(\bth)} \notag\\
 &= - \|\bmu\|^2 < 0.
\end{align}
Thus $Z[\b_0] < Z[\b]$, contradicting the fact that $\b(\bth)$ minimizes \eqref{eq:th:biased bayesian CRB}.
\end{proof}

\subsection{Tightness at Low SNR}

Bell \emph{et al.}\ \cite{bell97} examined the performance of the extended Ziv--Zakai bound at low SNR and demonstrated that, for a particular family of distributions, the extended Ziv--Zakai bound achieves the MSE of the optimal estimator as the SNR tends to $0$. We now examine the low-SNR performance of the OBB, and demonstrate tightness for a much wider range of problem settings.

Bell \emph{et al.}\ did not define the general meaning of a low SNR value, and only stated that ``[a]s observation time and/or SNR become very small, the observations become useless \ldots [and] the minimum MSE estimator converges to the \emph{a priori} mean.'' This statement clearly does not apply to all estimation problems, since it is not always clear what parameter corresponds to the observation time or the SNR\@. We propose to define the zero SNR case more generally as any situation in which $\J(\bth) = {\bf 0}$ with probability 1\@. This definition implies that the measurements do not contain information about the unknown parameter, which is the usual informal meaning of zero SNR\@. In the case $\J(\bth)={\bf 0}$, it can be shown that the MMSE estimator is the prior mean, so that our definition implies the statement of Bell \emph{et al.}

The OBB is inapplicable when $\J(\bth) = {\bf 0}$, since the CRB is based on the assumption that $\J(\bth)$ is positive definite. To avoid this singularity, we consider a sequence of estimation settings which converge to zero SNR\@. More specifically, we require all eigenvalues of $\J(\bth)$ to decrease monotonically to zero for $p_\bth$-almost all $\bth$. The following theorem, the proof of which can be found in Appendix~\ref{ap:asymp}, demonstrates the tightness of the OBB in this low-SNR setting.

\begin{theorem} \label{th:low SNR}
Let $\bth$ be a random vector whose pdf $p_\bth(\bth)$ is nonzero over an open set $\Theta \subseteq {\mathbb R}^n$.
Let $\x^{(1)}, \x^{(2)}, \ldots$ be a sequence of observation vectors having finite Fisher information matrices $\J^{(1)}(\bth), \J^{(2)}(\bth), \ldots$, respectively.
Suppose that, for all $N$, the matrix $\J^{(N)}(\bth)$ is positive definite for $p_\bth$-almost all $\bth$, and that all eigenvalues of $\J^{(N)}(\bth)$ decrease monotonically to zero as $N \rightarrow \infty$ for $p_\bth$-almost all $\bth$.
Let $\beta_N$ denote the optimal-bias bound for estimating $\bth$ from $\x^{(N)}$. Then,
\beq
\lim_{N \rightarrow \infty} \beta_N = \E{ \left\| \bth - \E{\bth} \right\|^2 }.
\eeq
\end{theorem}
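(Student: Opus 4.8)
The plan is to sandwich $\beta_N$ between two quantities that both converge to $\E{\|\bth - \E{\bth}\|^2}$. The lower bound is immediate: since $\beta_N$ is a valid lower bound on the Bayesian MSE of \emph{any} estimator, and the MMSE estimator is optimal, we have $\beta_N \le \MMSE_N \le \E{\|\bth - \E{\bth}\|^2}$, the last inequality because the constant estimator $\hbth \equiv \E{\bth}$ achieves MSE exactly $\E{\|\bth-\E{\bth}\|^2}$. So the real work is the matching lower bound on $\liminf_N \beta_N$, i.e., showing $\beta_N$ cannot stay bounded away from $\E{\|\bth-\E{\bth}\|^2}$ from below.

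For that direction I would use the variational characterization $\beta_N = Z_N[\bar\b^{(N)}] = \inf_{\b \in \HoTn} Z_N[\b]$, where $Z_N[\b] = \int_\Theta \bigl[\|\b(\bth)\|^2 + \Tr((\I + \partial\b/\partial\bth)(\J^{(N)})^{-1}(\I+\partial\b/\partial\bth)^T)\bigr] p_\bth(d\bth)$. The key observation is that the bias-gradient term is nonnegative, so $Z_N[\b] \ge \int_\Theta \|\b(\bth)\|^2 p_\bth(d\bth) = \|\b\|_\LtTn^2$ for every $\b$. Now I want to argue that the minimizer $\bar\b^{(N)}$, which balances the squared-bias part against a gradient penalty that is being scaled up without bound (since $(\J^{(N)})^{-1}$ has all eigenvalues tending to $+\infty$), must converge in $\LtTn$ to the optimal-bias function of the zero-SNR problem, namely $\b^{\star}(\bth) = \E{\bth} - \bth$. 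Indeed, $\b^\star$ is exactly the bias of the estimator $\hbth \equiv \E{\bth}$, and one checks $\|\b^\star\|_\LtTn^2 = \E{\|\bth - \E{\bth}\|^2}$. Since $\b^\star$ is affine it lies in $\HoTn$ with $\partial\b^\star/\partial\bth = -\I$, so $Z_N[\b^\star] = \|\b^\star\|_\LtTn^2 + \Tr\bigl(\int_\Theta \mathbf 0 \cdot (\J^{(N)})^{-1}\mathbf 0^T p_\bth(d\bth)\bigr) = \E{\|\bth-\E{\bth}\|^2}$ for \emph{all} $N$. This already gives $\beta_N \le \E{\|\bth - \E{\bth}\|^2}$ again, consistent with the first paragraph — but more importantly it shows the infimum is attained very close to $\b^\star$ once the penalty is large.

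To get the lower bound on $\beta_N$: write $\b^{(N)}$ for the minimizer and suppose, for contradiction along a subsequence, that $\beta_N \le \E{\|\bth-\E{\bth}\|^2} - \varepsilon$. Then from $\beta_N \ge \|\b^{(N)}\|_\LtTn^2$ we get a uniform $\HoTn$-type control: actually we get $\|\b^{(N)}\|_\LtTn^2 \le \E{\|\bth-\E{\bth}\|^2} - \varepsilon$ and also $\Tr(\int (\I+\partial\b^{(N)}/\partial\bth)(\J^{(N)})^{-1}(\cdots)^T p_\bth) \le \E{\|\bth-\E{\bth}\|^2}$. Because the smallest eigenvalue of $(\J^{(N)})^{-1}(\bth)$ is $1/\lambda_{\max}(\J^{(N)}(\bth)) \to \infty$ (monotonically, $p_\bth$-a.e.), this second bound forces $\int_\Theta \|\I + \partial\b^{(N)}/\partial\bth\|_F^2 p_\bth(d\bth) \to 0$ — here I need a dominated-convergence / monotone argument on the eigenvalues, and this is the step I expect to be the main obstacle, since the convergence of $\lambda_{\max}(\J^{(N)})$ to $0$ is only $p_\bth$-a.e. and I must rule out mass escaping where the convergence is slow; the monotonicity hypothesis is what makes this work via monotone convergence. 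Once $\partial\b^{(N)}/\partial\bth \to -\I$ in $\LtTn$, a Poincaré-type inequality on the bounded domain $\Theta$ (or, more carefully, the fact that $\HoTn$ is a Hilbert space and the affine function $\b^\star$ is the unique $\HoTn$ element with gradient $-\I$ up to... no — rather, the map $\b \mapsto \b + (\bth - \E{\bth})$ has gradient $\partial\b/\partial\bth + \I \to 0$, so $\b^{(N)} + (\bth - \E{\bth})$ has vanishing gradient and, being in $\HoTn$ on a connected bounded domain, is forced toward a constant; fixing the constant by $\E{\b^{(N)}} = \mathbf 0$ from Theorem~\ref{th:zero mean} pins it to $\mathbf 0$) yields $\b^{(N)} \to \b^\star$ in $\LtTn$. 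Then $\liminf_N \beta_N \ge \liminf_N \|\b^{(N)}\|_\LtTn^2 = \|\b^\star\|_\LtTn^2 = \E{\|\bth-\E{\bth}\|^2}$, contradicting the assumed gap. Combining the two directions gives $\lim_N \beta_N = \E{\|\bth-\E{\bth}\|^2}$.
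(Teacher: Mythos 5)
Your sandwich strategy and the upper half of it ($\beta_N \le v \triangleq \E{\|\bth-\E{\bth}\|^2}$ via the bias $\b^\star(\bth)=\E{\bth}-\bth$, for which the gradient term vanishes) coincide with the first part of the paper's proof. The genuine gap is in the lower-bound direction, exactly where you flagged it: from $\int_\Theta \Tr\bigl[(\I+\partial\b^{(N)}/\partial\bth)\,(\J^{(N)})^{-1}(\I+\partial\b^{(N)}/\partial\bth)^T\bigr]p_\bth(d\bth)\le v$ and $1/\lambda_{\max}(\J^{(N)}(\bth))\to\infty$ only $p_\bth$-a.e., you cannot conclude $\int_\Theta\|\I+\partial\b^{(N)}/\partial\bth\|_F^2\,p_\bth(d\bth)\to 0$. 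Monotone convergence does not repair this, because the integrand is a product of the monotone weight $1/\lambda_{\max}(\J^{(N)}(\bth))$ and a second factor $\|\I+\partial\b^{(N)}/\partial\bth\|_F^2$ that itself changes with $N$ and may concentrate its mass on the (for each $N$ still non-negligible) region where $\lambda_{\max}(\J^{(N)})$ remains of order one; nothing in the hypotheses excludes this for the sequence of minimizers. The paper avoids needing any such convergence: it introduces the auxiliary problem $u(t)=\inf\{\int_\Theta\|\I+\partial\b/\partial\bth\|_F^2 p_\bth(d\bth) : \|\b\|_\LtTn^2\le t\}$, uses Lemma~\ref{le:uniq min gen} (weak compactness of bounded sets in $\HoTn$ plus weak lower semicontinuity) to show the infimum is \emph{attained}, and notes that the only $\HoTn$ functions with weak derivative $\equiv-\I$ are $\b(\bth)=\k-\bth$, whose squared $\LtTn$ norm is at least $v$; hence $u(t)>0$ for every $t<v$. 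Then, if $\beta_N\uparrow q<v$, each minimizer $\b^{(N)}$ is feasible for $u(q)$, giving $\beta_N\ge u(q)/\lambda_N$ with $\lambda_N=\esssup_{\bth}\lambda_{\max}(\J^{(N)}(\bth))\to 0$, contradicting $\beta_N\le v$. Only the fixed positive constant $u(q)$ is needed, not vanishing of the gradient term.

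Your second repair step is also problematic under the stated hypotheses: the Poincar\'e--Wirtinger argument that forces $\b^{(N)}+\bth-\E{\bth}$ toward a constant (pinned by Theorem~\ref{th:zero mean}) requires $\Theta$ to be bounded, connected, and sufficiently regular, and moreover requires a Poincar\'e inequality in the \emph{weighted} space $\LtTn$ with weight $p_\bth$, which fails for general positive densities (no spectral gap when $p_\bth$ nearly disconnects $\Theta$ into two bulks) and is simply not assumed in Theorem~\ref{th:low SNR}, where $\Theta$ is an arbitrary open set. The paper's route needs no Poincar\'e inequality at all: the only structural input is the qualitative characterization of functions with zero gradient penalty, combined with the attainment argument of Lemma~\ref{le:uniq min gen} (connectedness of $\Theta$ is used implicitly there too, but your route demands strictly more). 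So the skeleton of your argument is right, but the core quantitative step must be replaced by the $u(t)>0$ compactness argument, or by an equivalent device supplying a uniform positive lower bound on the bias-gradient integral whenever $\|\b\|_\LtTn^2$ is bounded away from $v$.
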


\subsection{Tightness at High SNR}

We now examine the performance of the OBB for high SNR values. To formally define the high SNR regime, we consider a sequence of measurements $\x^{(1)}, \x^{(2)}, \ldots$ of a single parameter vector $\bth$. It is assumed that, when conditioned on $\bth$, all measurements $\x^{(i)}$ are identically and independently distributed (IID). Furthermore, we assume that the Fisher information matrix of a single observation $\J(\bth)$ is well-defined, positive definite and finite for $p_\bth$-almost all $\bth$. We consider the problem of estimating $\bth$ from the set of measurements $\{ \x^{(1)}, \ldots, \x^{(N)} \}$, for a given value of $N$. The high SNR regime is obtained when $N$ is large.

When $N$ tends to infinity, the MSE of the optimal estimator tends to zero. An important question, however, concerns the rate of convergence of the minimum MSE\@. More precisely, given the optimal estimator $\hbth^{(N)}$ of $\bth$ from $\{ \x^{(1)}, \ldots, \x^{(N)} \}$, one would like to determine the asymptotic distribution of $\sqrt{N} (\hbth^{(N)} - \bth)$, conditioned on $\bth$. A fundamental result of asymptotic estimation theory can be loosely stated as follows \cite[\S III.3]{ibragimov81}, \cite[\S 6.8]{lehmann98}. Under some fairly mild regularity conditions, the asymptotic distribution of $\sqrt{N} (\hbth^{(N)}-\bth)$, conditioned on $\bth$, does not depend on the prior distribution $p_\bth$; rather, $\sqrt{N} (\hbth^{(N)}-\bth) \, | \, \bth$ converges in distribution to a Gaussian random vector with mean zero and covariance $\J^{-1}(\bth)$. It follows that
\beq \label{eq:asymptotic MSE}
\lim_{N \rightarrow \infty} N \E{ \|\hbth^{(N)} - \bth\|^2 } = \E{\Tr[\J^{-1}(\bth)]}.
\eeq

Since the minimum MSE tends to zero at high SNR, any lower bound on the minimum MSE must also tend to zero as $N \rightarrow \infty$. However, one would further expect a good lower bound to follow the behavior of \eqref{eq:asymptotic MSE}. In other words, if $\beta_N$ represents the lower bound for estimating $\bth$ from $\{ \x^{(1)},\ldots,\x^{(N)} \}$, a desirable property is $N \beta_N \rightarrow \E{\Tr[\J^{-1}(\bth)]}$. The following theorem, whose proof is found in Appendix~\ref{ap:asymp}, demonstrates that this is indeed the case for the OBB.

Except for a very brief treatment by Bellini and Tartara \cite{BelliniTartara74}, no previous Bayesian bound has shown such a result. Although it appears that the Ziv--Zakai and Weiss--Weinstein bounds may also satisfy this property, this has not been proven formally. It is also known that the Bayesian CRB is \emph{not} asymptotically tight in this sense \cite[Eqs.~(37)--(39)]{vantrees07}.

\begin{theorem}
\label{th:high SNR}
Let $\bth$ be a random vector whose pdf $p_\bth(\bth)$ is nonzero over an open set $\Theta \subseteq {\mathbb R}^n$.
Let $\x^{(1)}, \x^{(2)}, \ldots$ be a sequence of measurement vectors, such that $\x^{(1)}|\bth, \x^{(2)}|\bth, \ldots$ are IID.
Let $\J(\bth)$ be the Fisher information matrix for estimating $\bth$ from $\x^{(1)}$, and suppose $\J(\bth)$ is finite and positive definite for $p_\bth$-almost all $\bth$.
Let $\beta_N$ be the optimal-bias bound \eqref{eq:th:biased bayesian CRB} for estimating $\bth$ from the observation sequence $\{ \x^{(1)}, \ldots, \x^{(N)} \}$. Then,
\beq \label{eq:th:high SNR}
\lim_{N \rightarrow \infty} N \beta_N = \E{\Tr(\J^{-1}(\bth))}.
\eeq
\end{theorem}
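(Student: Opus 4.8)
The plan is to bound $N\beta_N$ from above and below, showing both bounds converge to $\E{\Tr(\J^{-1}(\bth))}$. The Fisher information for estimating $\bth$ from the $N$ IID observations $\{\x^{(1)},\ldots,\x^{(N)}\}$ is $N\J(\bth)$, so that $\beta_N = \min_{\b\in\HoTn} \int_\Theta \big[\|\b(\bth)\|^2 + \Tr((\I+\partial\b/\partial\bth)(N\J)^{-1}(\I+\partial\b/\partial\bth)^T)\big] p_\bth(d\bth)$. For the upper bound, the natural move is to plug a convenient feasible $\b$ into this functional rather than solving the Euler--Lagrange equation. Taking $\b = {\bf 0}$ already gives $\beta_N \le \tfrac1N\E{\Tr(\J^{-1}(\bth))}$, hence $\limsup_N N\beta_N \le \E{\Tr(\J^{-1}(\bth))}$, provided this expectation is finite; if it is infinite the inequality \eqref{eq:th:high SNR} holds trivially by the lower bound, so I would dispose of that case up front and henceforth assume $\E{\Tr(\J^{-1}(\bth))} < \infty$.

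The substance is the matching lower bound, $\liminf_N N\beta_N \ge \E{\Tr(\J^{-1}(\bth))}$. First I would rescale: write $\beta_N = \tfrac1N \inf_{\b} \int_\Theta \big[ N\|\b\|^2 + \Tr((\I+\partial\b/\partial\bth)\J^{-1}(\I+\partial\b/\partial\bth)^T)\big] p_\bth(d\bth)$, so that $N\beta_N = \inf_\b \int_\Theta [ N\|\b(\bth)\|^2 + \CRB_0[\b,\bth] ]\, p_\bth(d\bth)$, where $\CRB_0[\b,\bth] = \Tr((\I+\partial\b/\partial\bth)\J^{-1}(\I+\partial\b/\partial\bth)^T)$ is the gradient part alone. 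Let $\bar\b^{(N)}$ be the (unique, by Theorem~\ref{th:biased bayesian CRB}) minimizer. The heavy penalty $N\|\b\|^2$ forces $\|\bar\b^{(N)}\|_\LtTn \to 0$: indeed comparing with $\b={\bf 0}$ gives $N\|\bar\b^{(N)}\|_\LtTn^2 \le \E{\Tr(\J^{-1})} < \infty$, so $\|\bar\b^{(N)}\|_\LtTn^2 = O(1/N)$. The goal is then to show the gradient part of the functional, evaluated at $\bar\b^{(N)}$, is asymptotically at least $\E{\Tr(\J^{-1}(\bth))}$, i.e. that $\partial\bar\b^{(N)}/\partial\bth \to {\bf 0}$ in an appropriate sense. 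The pointwise-in-$\bth$ minimum of $\Tr((\I+G)\J^{-1}(\I+G)^T)$ over matrices $G$ is attained at $G={\bf 0}$ (the integrand is a squared Frobenius-type norm in $\I+G$ weighted by the positive-definite $\J^{-1}$), with value $\Tr(\J^{-1}(\bth))$; so pointwise the gradient part is always $\ge \Tr(\J^{-1}(\bth))$ and we would be done if we could simply drop the $N\|\b\|^2$ term --- but that term is exactly what is needed to prevent the oscillatory escape illustrated in Fig.~\ref{fig:bias seq}, so the argument cannot be that cheap.

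Here is where I expect the main obstacle, and the approach I would take. The difficulty is that $\|\bar\b^{(N)}\|_\LtTn \to 0$ does not by itself control $\partial\bar\b^{(N)}/\partial\bth$; a sequence can go to zero in $\LtTn$ while its weak derivatives do something wild. The key quantitative tool I would exploit is the Euler--Lagrange characterization of $\bar\b^{(N)}$ from Theorem~\ref{TH:E-L} (on a bounded $\Theta$ with smooth boundary, after the reduction already discussed in Section~\ref{se:calc}; the general case is handled by the $\epsilon$-truncation of $\Theta$ remarked there), or, more robustly, a direct first-variation / testing argument: since $\bar\b^{(N)}$ minimizes $Z_N[\b] := \int_\Theta[N\|\b\|^2 + \CRB_0[\b,\bth]]p_\bth(d\bth)$, taking the Gateaux derivative in the direction $\bar\b^{(N)}$ itself gives a Rellich-type identity relating $\|\bar\b^{(N)}\|_\LtTn$, $\|\partial\bar\b^{(N)}/\partial\bth\|_\LtTn$, and the cross term $\int_\Theta \Tr(\J^{-1}\,\partial\bar\b^{(N)}/\partial\bth)\,p_\bth(d\bth)$. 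Combined with the a priori bound $Z_N[\bar\b^{(N)}] \le \E{\Tr(\J^{-1})}$, this should force $\int_\Theta \Tr((\partial\bar\b^{(N)}/\partial\bth)\J^{-1}(\partial\bar\b^{(N)}/\partial\bth)^T)p_\bth(d\bth) \to 0$ and the cross term $\to 0$ as well, whence $Z_N[\bar\b^{(N)}] \ge \int_\Theta \Tr(\J^{-1})p_\bth(d\bth) + o(1)$ by expanding the quadratic. The only remaining care is integrability at points where $\J(\bth)$ is nearly singular: since $\J^{-1}$ need not be bounded, I would either invoke the standing hypothesis that makes $\E{\Tr(\J^{-1})}$ finite (having set aside the infinite case) and use dominated convergence, or introduce a cutoff $\Theta_\epsilon = \{\bth : \lambda_{\min}(\J(\bth)) > \epsilon\}$, prove the estimate on $\Theta_\epsilon$, and let $\epsilon \downarrow 0$. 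Assembling the $\limsup$ and $\liminf$ bounds then yields \eqref{eq:th:high SNR}.
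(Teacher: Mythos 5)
Your upper bound (plugging in $\b={\bf 0}$ to get $N\beta_N\le\E{\Tr(\J^{-1}(\bth))}$) matches the paper, but your lower bound---the entire substance of the theorem---has a genuine gap. First, the pointwise claim that $\Tr((\I+\G)\J^{-1}(\I+\G)^T)$ is minimized over matrices $\G$ at $\G={\bf 0}$ is false: this is a weighted squared norm of $\I+\G$, minimized at $\G=-\I$ with value $0$, so the asserted pointwise inequality ``gradient part $\ge\Tr(\J^{-1}(\bth))$'' does not hold (if it did, the theorem would be immediate), and this is exactly why the oscillation phenomenon of Fig.~\ref{fig:bias seq} is a threat. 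Second, and more importantly, the mechanism you propose to close the argument does not close it. Write $B_N=\int_\Theta\|\b^{(N)}\|^2 p_\bth(d\bth)$, $Q_N=\int_\Theta\Tr\bigl((\partial\b^{(N)}/\partial\bth)\J^{-1}(\partial\b^{(N)}/\partial\bth)^T\bigr)p_\bth(d\bth)$, and $C_N$ for the cross term $\int_\Theta\Tr\bigl(\J^{-1}(\partial\b^{(N)}/\partial\bth)^T\bigr)p_\bth(d\bth)$. The G\^ateaux (``Rellich-type'') identity obtained by testing the first variation with $\b^{(N)}$ itself reads $N B_N + C_N + Q_N = 0$, and substituting this into the quadratic expansion of the functional gives exactly $N\beta_N = \E{\Tr(\J^{-1}(\bth))} - N B_N - Q_N$. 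So the identity, together with the a priori bound $N\beta_N\le\E{\Tr(\J^{-1}(\bth))}$, merely restates the problem: the theorem is \emph{equivalent} to $N B_N + Q_N\to 0$, and nothing in your argument forces this; it is entirely consistent with your relations that $Q_N\to c>0$, in which case the limit would fall short of $\E{\Tr(\J^{-1}(\bth))}$.

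What is missing is a coercivity statement ruling out bias functions with vanishing $\LtTn$ norm whose gradient part stays strictly below $\E{\Tr(\J^{-1}(\bth))}$. The paper supplies precisely this: it introduces $w(t)$, the minimum of $\int_\Theta\|\b\|^2p_\bth(d\bth)$ subject to the gradient part being at most $t$, and shows via Lemma~\ref{le:uniq min gen} (existence of minimizers on closed, bounded, convex subsets of $\HoTn$, i.e., Banach--Alaoglu plus weak lower semicontinuity) that the minimum is \emph{attained}; since the only function with zero objective is $\b={\bf 0}$, whose gradient part equals $\E{\Tr(\J^{-1}(\bth))}$, one gets $w(t)>0$ for every $t<\E{\Tr(\J^{-1}(\bth))}$. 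Then, using monotonicity of $N\beta_N$ (from $\J^{(N)}=N\J$) to get convergence to some $r\le\E{\Tr(\J^{-1}(\bth))}$, the assumption $r<\E{\Tr(\J^{-1}(\bth))}$ makes $\b^{(N)}$ feasible for $t=r$ and yields $N\beta_N\ge N w(r)\to\infty$, a contradiction. Your alternative of integrating the cross term by parts could work, but only under smoothness and boundary-regularity assumptions on $\J^{-1}p_\bth$ that the theorem does not make and that you do not establish; likewise, your disposal of the case $\E{\Tr(\J^{-1}(\bth))}=\infty$ (``holds trivially by the lower bound'') is circular, since your lower-bound argument assumes finiteness, whereas the paper's $w(t)$ argument covers that case as stated.
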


Note that for Theorem~\ref{th:high SNR} to hold, we require only that $\J(\bth)$ be finite and positive definite. By contrast, the various theorems guaranteeing asymptotic efficiency of Bayesian estimators all require substantially stronger regularity conditions \cite[\S III.3]{ibragimov81}, \cite[\S 6.8]{lehmann98}. One reason for this is that asymptotic efficiency describes the behavior of $\hbth$ conditioned on each possible value of $\bth$, and is thus a stronger result than the asymptotic Bayesian MSE of \eqref{eq:asymptotic MSE}.

\section{Example: Uniform Prior}
\label{se:compare}

The original bound of Young and Westerberg \cite{young71} predates most Bayesian bounds, and, surprisingly, it has never been cited by or compared with later results. In this section, we measure the performance of the original bound and of its extension to the vector case against that of various other techniques. We consider the case in which $\bth$ is uniformly distributed over an $n$-dimensional open ball $\Theta = \{ \bth: \|\bth\| < r \} \subseteq {\mathbb R}^n$, so that
\beq
p_\bth(\bth) = \frac{1}{V_n(r)} \One_\Theta
\eeq
where $\One_S$ equals $1$ when $\bth \in S$ and $0$ otherwise, and
\beq
V_n(r) = \frac{\pi^{n/2} r^{n-1}}{\Gamma(1+n/2)}
\eeq
is the volume of an $n$-ball of radius $r$ \cite{vinogradov95}. We further assume that
\beq
\x = \bth + \w
\eeq
where $\w$ is zero-mean Gaussian noise, independent of $\bth$, having covariance $\sigma^2 \I$. We are interested in lower bounds on the MSE achievable by an estimator of $\bth$ from $\x$.

We begin by developing the OBB for this setting, as well as some alternative bounds. We then compare the different approaches in a one-dimensional and a three-dimensional setting.

The Fisher information matrix for the given estimation problem is given by $\J(\bth) = \sigma^{-2} \I$, so that the conditions of Theorem~\ref{TH:SPH SYM} hold. It follows that the optimal bias function is given by $\b(\bth) = b(\|\bth\|) \bth/\|\bth\|$, where $b(\cdot)$ is a solution to the differential equation
\beq
\frac{b}{\sigma^2} = b'' + (n-1) \left( \frac{b'}{\theta} - \frac{b}{\theta^2} \right)
\eeq
with boundary conditions $b(0)=0$, $b'(r)=-1$. The general solution to this differential equation is given by
\beq \label{eq:pde sol}
b(\theta) = C_1 \theta^{1 - n/2} \BesselI{n/2}{\frac{\theta}{\sigma}}
          + C_2 \theta^{1 - n/2} \BesselK{n/2}{\frac{\theta}{\sigma}}
\eeq
where $\BesselI{\alpha}{z}$ and $\BesselK{\alpha}{z}$ are the modified Bessel functions of the first and second types, respectively \cite{abramowitz64}. Since $\BesselK{\alpha}{z}$ is singular at the origin, the requirement $b(0)=0$ leads to $C_2=0$. Differentiating \eqref{eq:pde sol} with respect to $\theta$, we obtain
\beq
b'(\theta) = C_1 \theta^{-n/2} \left( \BesselI{n/2}{\frac{\theta}{\sigma}}
            + \frac{\theta}{\sigma} \BesselI{1+n/2}{\frac{\theta}{\sigma}} \right)
\eeq
so that the requirement $b'(r)=-1$ leads to
\beq
C_1 = - \frac{r^{n/2}}{\BesselI{n/2}{r/\sigma} + r/\sigma \BesselI{1+n/2}{r/\sigma}}.
\eeq
Substituting this value of $b(\cdot)$ into \eqref{eq:th:sph sym} yields the OBB, which can be computed by evaluating a single one-dimensional integral. Alternatively, in the one-dimensional case, the integral can be computed analytically, as will be shown below.

Despite the widespread use of finite-support prior distributions \cite{WeinsteinWeiss88, ZivZakai69}, the regularity conditions of many bounds are violated by such prior pdf functions. Indeed, the Bayesian CRB of Van Trees \cite{vantrees68}, the Bobrovski--Zakai bound \cite{BobrovskiZakai76}, and the Bayesian Abel bound \cite{renaux06} all assume that $p_\bth(\bth)$ has infinite support, and thus cannot be applied in this scenario.

Techniques from the Ziv--Zakai family are applicable to constrained problems. An extension of the Ziv--Zakai bound for vector parameter estimation was developed by Bell \emph{et al.}\ \cite{bell97}. From \cite[Property~4]{bell97}, the MSE of the $i$th component of $\bth$ is bounded by
\beq \label{eq:ezzb1}
\E{(\theta_i - \hat{\theta}_i)^2}
 \ge \int_0^\infty V \left\{ \max_{\bdel: \e_i^T \bdel = h} A(\bdel) P_{\min}(\bdel) \right\} h \, d h
\eeq
where $\e_i$ is a unit vector in the direction of the $i$th component, $V\{\cdot\}$ is the valley-filling function defined by
\beq
V\{ f(h) \} = \max_{\eta \ge 0} f(h+\eta),
\eeq
\beq
A(\bdel) \triangleq
 \int_{{\mathbb R}^n} \min \left( p_\bth(\bth), p_\bth(\bth+\bdel) \right) d\bth,
\eeq
and $P_{\min}(\bdel)$ is the minimum probability of error for the problem of testing hypothesis $H_0: \bth = \bth_0$ vs.\ $H_1: \bth = \bth_0 + \bdel$. In the current setting, $P_{\min}(\bdel)$ is given by $P_{\min}(\bdel) = \Qfunc{\|\bdel\|/2\sigma}$, where $\Qfunc{z} = (2\pi)^{-1/2} \int_z^{\infty} e^{-t^2/2} dt$ is the tail function of the normal distribution. Also, we have
\beq
A(\bdel) = \frac{V_n^C(r,\|\bdel\|)}{V_n(r)}
\eeq
where
\beq
V_n^C(r,h) = \int_{{\mathbb R}^n} \One_\Theta \One_{\Theta+h \e_1} d\bth
\eeq
and $\Theta + h \e_1 = \{ \bth + h \e_1 : \bth \in \Theta \}$. Thus, $V_n^C(r,h)$ is the volume of the intersection of two $n$-balls whose centers are at a distance of $h$ units from one another. Substituting these results into \eqref{eq:ezzb1}, we have
\begin{align} \label{eq:ezzb2}
&\E{(\theta_i - \hat{\theta}_i)^2} \notag\\
&\quad \ge \int_0^\infty V \left\{ \max_{\bdel: \e_i^T \bdel = h}  \frac{V_n^C(r,\|\bdel\|)}{V_n(r)} \Qfunc{\frac{\|\bdel\|}{2\sigma}} \right\} h \, d h.
\end{align}
Note that both $V_n^C(r,\|\bdel\|)$ and $Q(\|\bdel\|/2\sigma)$ decrease with $\|\bdel\|$. Therefore, the maximum in \eqref{eq:ezzb2} is obtained for $\bdel = h \e_i$. Also, since the argument of $V\{\cdot\}$ is monotonically decreasing, the valley-filling function has no effect and can be removed. Finally, since $V_n^C(r,h)=0$ for $h>2r$, the integration can be limited to the range $[0,2r]$. Thus, the extended Ziv--Zakai bound is given by
\beq \label{eq:ezzb f}
\E{\|\bth - \hbth\|^2}
 \ge \int_0^{2r} n \frac{V_n^C(r,h)}{V_n(r)} \Qfunc{\frac{h}{2\sigma}}  h \, d h.
\eeq

\begin{figure*}
\centerline{%
\subfigure[]{%
\includegraphics{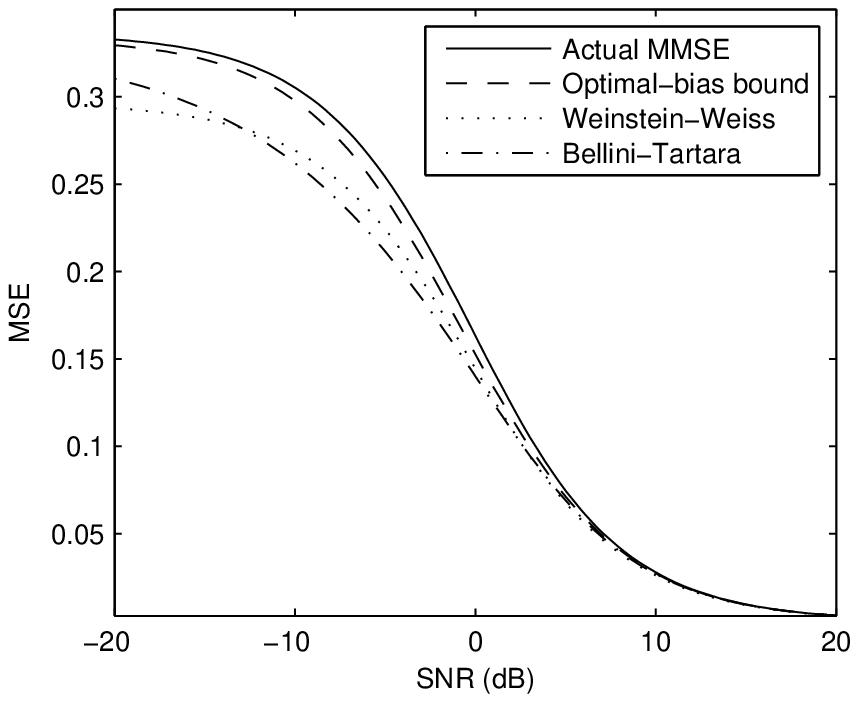}%
\label{fig:1d mse}}
\hfil
\subfigure[]{%
\includegraphics{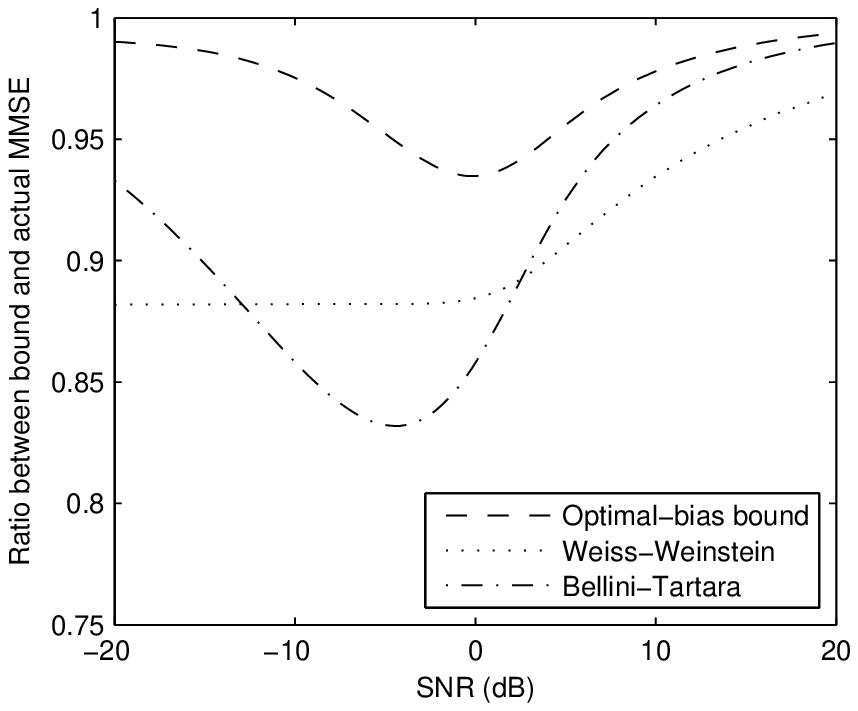}%
\label{fig:1d ratio}}}
\caption{Comparison of the MSE bounds and the minimum achievable MSE in a one-dimensional setting for which $\theta \sim U[-r,r]$ and $x|\theta \sim N(\theta, \sigma^2)$.}
\label{fig:1d}
\end{figure*}

We now compute the Weiss--Weinstein bound for the setting at hand. This bound is given by
\beq \label{eq:wwb}
\E{\|\bth - \hbth\|^2} \ge \Tr(\H \G^{-1} \H^T)
\eeq
where $\H = [\h_1, \ldots, \h_m]$ is a matrix containing an arbitrary number $m$ of test vectors and $\G$ is a matrix whose elements are given by
\beq
G_{ij} = \frac{\E{r(\x,\bth;\h_i,s_i) r(\x,\bth;\h_j,s_j)}}
              {\E{L^{s_i}(\x;\bth+\h_i,\bth)}\E{L^{s_j}(\x;\bth+\h_j,\bth)}}
\eeq
in which
\beq
r(\x,\bth;\h_i,s_i) \triangleq L^{s_i}(\x;\bth+\h_i,\bth) - L^{1-s_i}(\x;\bth-\h_i,\bth)
\eeq
and
\beq
L(\x;\bth_1,\bth_2) \triangleq \frac{p_\bth(\bth_1) p_{\x|\bth}(\x|\bth_1)}{p_\bth(\bth_2) p_{\x|\bth}(\x|\bth_2)}.
\eeq
The vectors $\h_1, \ldots, \h_m$ and the scalars $s_1, \ldots, s_m$ are arbitrary, and can be optimized to maximize the bound \eqref{eq:wwb}. To avoid a multidimensional nonconvex optimization problem, we restrict attention to $m=n$, $\h_i = h \e_i$, and $s_i = 1/2$, as suggested by \cite{WeinsteinWeiss88}. This results in a dependency on a single scalar parameter $h$.

Under these conditions, $G_{ij}$ can be written as
\begin{multline} \label{eq:Gij}
G_{ij} =
\frac{1}{M(\h_i) M(\h_j)}
  \big[
    \tilde{M}(\h_i-\h_j,-\h_j) + \tilde{M}(\h_i-\h_j,\h_i) \\
  - \tilde{M}(\h_i+\h_j,\h_j) - \tilde{M}(\h_i+\h_j,\h_i)
  \big]
\end{multline}
where
\beq
M(\h) \triangleq \E{L^{1/2}(\x;\bth+\h,\bth)}
\eeq
and
\beq
\tilde{M}(\h_1,\h_2) \triangleq \E{L^{1/2}(\x;\bth+\h_1,\bth) \One_{\Theta+\h_2}}.
\eeq
Note that we have used the corrected version of the Weiss--Weinstein bound \cite{ben-haim07b}. Substituting the probability distribution of $\x$ and $\bth$ into the definitions of $M(\h)$ and $\tilde{M}(\h_1,\h_2)$, we have
\begin{align}
M(\h)
&= \E{e^{-\|\bth+\h-\x\|^2/4\sigma^2} e^{\|\bth-\x\|^2/4\sigma^2} \One_{\Theta+\h}} \notag\\
&= \frac{V_n^C(r,\|\h\|)}{V_n(r)}  e^{-\|\h\|^2/8\sigma^2}
\end{align}
and, similarly,
\begin{align} \label{eq:tilde M}
\tilde{M}(\h_1,\h_2)
&= \frac{e^{-\|\h_1\|^2/8\sigma^2}}{V_n(r)} \int \One_\Theta \One_{\Theta+\h_1} \One_{\Theta+\h_2} d\bth .
\end{align}
Thus, $M(\h)$ is a function only of $\|\h\|$, and $\tilde{M}(\h_1,\h_2)$ is a function only of $\|\h_1\|$, $\|\h_2\|$, and $\|\h_1-\h_2\|$. Since $\h_i = h \e_i$, it follows that, for $i \neq j$, the numerator of \eqref{eq:Gij} vanishes. Thus, $\G$ is a diagonal matrix, whose diagonal elements equal
\beq
G_{ii} = 2 \frac{ \tilde{M}(0,h \e_1) - \tilde{M}(2h\e_1, h\e_1) }{ M^2(h \e_1) }.
\eeq
The Weiss--Weinstein bound is given by substituting this result into \eqref{eq:wwb} and maximizing over $h$, i.e.,
\beq \label{eq:WWB vec}
\E{\|\bth - \hbth\|^2}
 \ge \max_{h \in [0,2r]} \frac{n h^2 M^2(h \e_1)}{2[\tilde{M}(0,h \e_1) - \tilde{M}(2h\e_1,h\e_1)]}.
\eeq
The value of $h$ yielding the tightest bound can be determined by performing a grid search.

To compare the OBB with the alternative approaches developed above, we first consider the one-dimensional case in which $\theta$ is uniformly distributed in the range $\Theta = (-r,r)$. Let $x = \theta + w$ be a single noisy observation, where $w$ is zero-mean Gaussian noise, independent of $\theta$, with variance $\sigma^2$. We wish to bound the MSE of an estimator of $\theta$ from $x$.

The optimal bias function is given by \eqref{eq:pde sol}. Using the fact that $\BesselI{1/2}{t} = \sqrt{2/\pi} \sinh(t)/\sqrt{t}$, we obtain
\beq
b(\theta) = -\sigma \frac{\sinh(\theta/\sigma)}{\cosh(r/\sigma)}
\eeq
which also follows \cite{young71} from Corollary~\ref{co:E-L 1d}. Substituting this expression into \eqref{eq:th:biased bayesian CRB}, we have that, for any estimator $\hth$,
\beq \label{eq:bound unif gauss}
\E{(\theta - \hth)^2} \ge \sigma^2\left( 1 - \frac{\tanh(r/\sigma)}{r/\sigma} \right).
\eeq

Apart from the reduction in computational complexity, the simplicity of \eqref{eq:bound unif gauss} also emphasizes several features of the estimation problem. First, the dependence of the problem on the dimensionless quantity $r/\sigma$, rather than on $r$ and $\sigma$ separately, is clear. This is to be expected, as a change in units of measurement would multiply both $r$ and $\sigma$ by a constant. Second, the asymptotic properties demonstrated in Theorems \ref{th:low SNR} and \ref{th:high SNR} can be easily verified. For $r \gg \sigma$, the bound converges to the noise variance $\sigma^2$, corresponding to an uninformative prior whose optimal estimator is $\hth = x$; whereas, for $\sigma \gg r$, a Taylor expansion of $\tanh(z)/z$ immediately shows that the bound converges to $r^2/3$, corresponding to the case of uninformative measurements, where the optimal estimator is $\hth = 0$. Thus, the bound \eqref{eq:bound unif gauss} is tight both for very low and for very high SNR, as expected.

In the one-dimensional case, we have $V_1(r) = 2r$ and $V_1^C(r,h) = \max(2r-h,0)$, so that the extended Ziv--Zakai bound \eqref{eq:ezzb f} and the Weiss--Weinstein bound \eqref{eq:WWB vec} can also be simplified somewhat. In particular, the extended Ziv--Zakai bound \eqref{eq:ezzb f} can be written as
\beq \label{eq:ezzb 1d}
\E{\|\bth - \hbth\|^2}
 \ge \int_0^{2r} \left( 1 - \frac{h}{2r} \right) h \Qfunc{\frac{h}{2\sigma}}  d h.
\eeq
Using integration by parts, \eqref{eq:ezzb 1d} becomes
\begin{multline} \label{eq:ezzb 1d final}
\E{\|\bth - \hbth\|^2}
 \ge \frac{2r^2}{3} \Qfunc{\frac{r}{\sigma}} \\
  + \sigma^2 \left[ \gammainc{3/2}{\frac{r^2}{2\sigma^2}}
                        - \frac{8}{3 \sqrt{2\pi}} \frac{\sigma}{r}
                             \gammainc{2}{\frac{r^2}{2\sigma^2}}
             \right]
\end{multline}
where $\gammainc{a}{z} = (1/\Gamma(a)) \int_0^z e^{-t} t^{a-1} dt$ is the incomplete Gamma function. Like the expression \eqref{eq:bound unif gauss} for the OBB, this bound can be shown to converge to the noise variance $\sigma^2$ when $r \gg \sigma$ and to the prior variance $r^2/3$ when $\sigma \gg r$. However, while the convergence of the OBB to these asymptotic values has been demonstrated in general in Theorems \ref{th:low SNR} and~\ref{th:high SNR}, the asymptotic tightness of the Ziv--Zakai bound in the general case remains an open question.

The Weiss--Weinstein bound \eqref{eq:WWB vec} can likewise be simplified further in the one-dimensional case, yielding
\begin{align}
&\E{\|\bth - \hbth\|^2}\notag\\
&\quad \ge \max_{h \in [0,2r]}
  \frac{ h^2 e^{-h^2/4\sigma^2} \left( 1 - \frac{h}{2r} \right)^2 }
       { 2 \left(
          1 - \frac{h}{2r} - \max\left(0, 1-\frac{h}{r}\right) e^{-h^2/2\sigma^2}
       \right) }.
\end{align}
However, calculating this bound still requires a numerical search for the optimal value of $h$.

\begin{figure*}
\centerline{%
\subfigure[]{%
\includegraphics{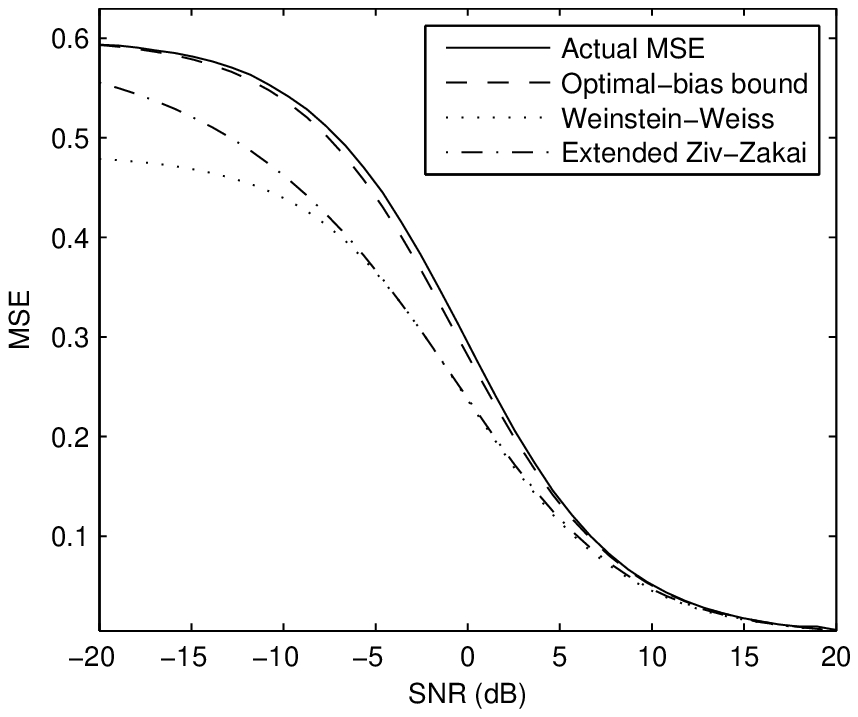}%
\label{fig:3d mse}}
\hfil
\subfigure[]{%
\includegraphics{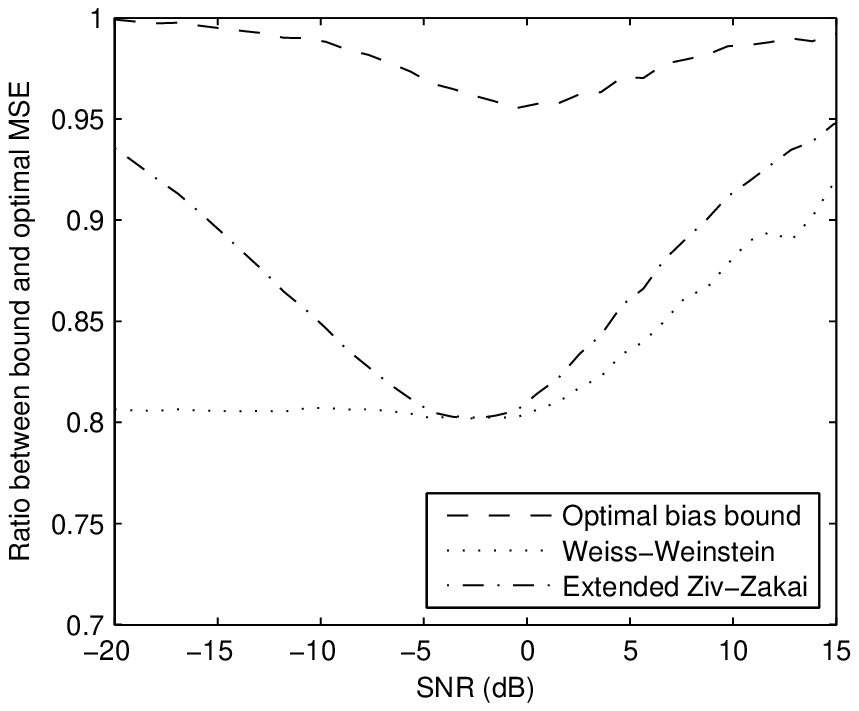}%
\label{fig:3d ratio}}}
\caption{Comparison of the MSE bounds and the minimum achievable MSE in a three-dimensional setting for which $\bth$ is uniformly distributed over a ball of radius $r$ and $\x|\bth \sim N(\bth, \sigma^2\I)$.}
\label{fig:3d}
\end{figure*}

These bounds are compared with the exact value of the MMSE in Fig.~\ref{fig:1d}. In this figure, the SNR is defined as
\beq \label{eq:WWB 1d}
\mathrm{SNR (dB)}
 = 10 \log_{10} \left( \frac{\Var(\theta)}{\Var(w)} \right)
 = 10 \log_{10} \left( \frac{r^2}{3\sigma^2} \right).
\eeq
The MMSE was computed by Monte Carlo approximation of the error of the optimal estimator $E\{\theta|x\}$, which was itself computed by numerical integration. Fig.~\ref{fig:1d mse} plots the MMSE and the values obtained by the aforementioned bounds, while Fig.~\ref{fig:1d ratio} plots the ratio between each of the bounds and the actual MMSE in order to emphasize the difference in accuracy between the various bounds. As can be seen from this figure, the OBB is closer to the true MSE than all other bounds, for all tested SNR values.

The improvements provided by the OBB continue to hold in higher dimensions as well, although in this case it is not possible to provide a closed form for any of the bounds. For example, Fig.~\ref{fig:3d} compares the aforementioned bounds with the true MMSE in the three-dimensional case. In this case the SNR is given by
\beq
\mathrm{SNR (dB)}
 = 10 \log_{10} \left( \frac{\Var(\bth)}{\Var(\w)} \right)
 = 10 \log_{10} \left( \frac{r^2}{5\sigma^2} \right).
\eeq
Here, computation of the minimum MSE requires multi-dimensional numerical integration, and is by far more computationally complex than the calculation of the bounds. Again, it is evident from this figure that the OBB is a very tight bound in all ranges of operation, and is considerably closer to the true value than either of the alternative approaches.

\section{Conclusion}

Although often considered distinct settings, there are insightful connections between the Bayesian and deterministic estimation problems. One such relation is the use of the deterministic CRB in a Bayesian problem. The application of this deterministic bound to the problem of estimating the minimum Bayesian MSE results in a Bayesian bound which is provably tight at both high and low SNR values. Numerical simulation of the location estimation problem demonstrates that the technique is both simpler and tighter than alternative approaches.

\section*{Acknowledgement}

The authors are grateful to Dr.\ Volker Pohl for fruitful discussions concerning many of the mathematical aspects of the paper. The authors would also like to thank the anonymous reviewers for their many constructive comments.

\appendices

\section{Some Technical Lemmas}
\label{ap:technical lemmas}

The proof of several theorems in the paper relies on the following technical results.

\begin{lemma} \label{le:uniq min gen}
Consider the minimization problems
\beq \label{eq:M ell}
M_\ell = \inf_{\b \in S} Z_\ell[\b], \quad \ell = 1,2,3
\eeq
where $\J(\bth)$ is positive definite and bounded a.e.\ ($p_\bth$),
\begin{align} \label{eq:def Z1 Z2}
Z_1[\b] &\triangleq \int_\Theta \|\b(\bth)\|^2 p_\bth(d\bth) \notag\\
Z_2[\b] &\triangleq \int_\Theta
    \Tr\!\left(
        \left(\I+\pd{\b}{\bth}\right) \J^{-1}(\bth) \left(\I+\pd{\b}{\bth}\right)^T
    \right) p_\bth(d\bth) \notag\\
Z_3[\b] &\triangleq Z_1[\b] + Z_2[\b]
\end{align}
and $S \subset \HoTn$ is convex, closed, and bounded under the $\HoTn$ norm \eqref{eq:HoTn inner prod}. Then, for each $\ell$, there exists a function $\b^{(0)} \in S$ such that $Z[\b^{(0)}]=M_\ell$. If $\ell = 1$ or $\ell = 3$, then the minimizer of \eqref{eq:M ell} is unique.
\end{lemma}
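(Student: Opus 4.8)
The plan is to prove Lemma~\ref{le:uniq min gen} by the standard direct method of the calculus of variations, exploiting the Hilbert space structure of $\HoTn$. The key observations are that each $Z_\ell$ is a (weakly) lower semicontinuous, coercive functional on $S$, and that $S$ is weakly compact because it is closed, bounded, and convex in a Hilbert space. Concretely, I would proceed as follows.

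First, establish convexity of each $Z_\ell$ as a functional of $\b$. The functional $Z_1[\b] = \|\b\|_\LtTn^2$ is a squared norm, hence strictly convex. For $Z_2$, note that $\b \mapsto \I + \partial\b/\partial\bth$ is affine, and that for any fixed $\bth$ the map $\A \mapsto \Tr(\A \J^{-1}(\bth) \A^T)$ is convex in $\A$ (it is a nonnegative quadratic form since $\J^{-1}(\bth)$ is positive definite); integrating against $p_\bth$ preserves convexity. Thus $Z_2$ is convex and $Z_3 = Z_1 + Z_2$ is strictly convex (being the sum of a strictly convex and a convex functional). Strict convexity on the convex set $S$ immediately yields uniqueness of the minimizer for $\ell = 1$ and $\ell = 3$, once existence is shown; for $\ell = 2$ only existence is claimed, which is consistent since $Z_2$ need not be strictly convex (e.g.\ it is unaffected by adding a constant to $\b$).

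Second, establish existence via weak compactness. Since $\HoTn$ is a Hilbert space and $S$ is bounded, closed, and convex, $S$ is weakly sequentially compact (bounded sequences have weakly convergent subsequences, and closed convex sets are weakly closed by Mazur's lemma). Take a minimizing sequence $\b^{(k)} \in S$ with $Z_\ell[\b^{(k)}] \to M_\ell$; note $M_\ell$ is finite and nonnegative since $Z_\ell \ge 0$ and $Z_\ell[\mathbf{0}] < \infty$ using boundedness of $\J^{-1}$ — wait, here we need $\Tr(\J^{-1})$ integrable, which follows from $\J$ being positive definite and the relevant bound; more carefully, $Z_\ell$ is finite on $S$ since $S$ is $\HoTn$-bounded and the integrand of $Z_2$ is controlled by $\|\I + \partial\b/\partial\bth\|_F^2$ times $\esssup \|\J^{-1}\|$. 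Pass to a subsequence with $\b^{(k)} \rightharpoonup \b^{(0)} \in S$. Since each $Z_\ell$ is convex and strongly continuous (indeed strongly lower semicontinuous), it is weakly lower semicontinuous, so $Z_\ell[\b^{(0)}] \le \liminf_k Z_\ell[\b^{(k)}] = M_\ell$, and since $\b^{(0)} \in S$ we get $Z_\ell[\b^{(0)}] = M_\ell$. This gives the minimizer.

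The main obstacle is the weak lower semicontinuity step, and in particular verifying that $Z_2$ is well-behaved under weak $\HoTn$-convergence despite the possibly unbounded (only a.e.\ bounded, or merely measurable) weight $\J^{-1}(\bth)$ and the discontinuous or compactly supported prior $p_\bth$. The clean route is: convex plus strongly lower semicontinuous implies weakly lower semicontinuous (a consequence of Mazur's lemma), so I only need strong lower semicontinuity of $Z_2$, which follows from Fatou's lemma applied to a strongly convergent subsequence of $\partial\b^{(k)}/\partial\bth$ in $\LtTn$ (pointwise a.e.\ along a further subsequence), using that the integrand is nonnegative. A secondary technical point is confirming that $S$ being $\HoTn$-closed makes it weakly closed; this is exactly Mazur's theorem for convex sets. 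I would also remark that the hypothesis that $\J$ is bounded a.e.\ is what keeps $Z_2[\b] < \infty$ for $\b \in S$, and positive-definiteness a.e.\ is what makes the quadratic form genuinely convex; neither full boundedness of $\J^{-1}$ nor any regularity of $p_\bth$ beyond measurability is needed.
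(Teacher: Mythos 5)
Your proposal is correct and follows essentially the same route as the paper: the direct method, i.e., weak sequential compactness of the closed, bounded, convex set $S$ in the Hilbert space $\HoTn$, weak lower semicontinuity of $Z_\ell$ along a minimizing sequence, and strict convexity of $Z_1$ (hence of $Z_3 = Z_1 + Z_2$) for uniqueness. The only substantive difference is how weak lower semicontinuity is justified: the paper proves it by hand, pairing the sequence against the weak limit through the continuous linear functionals $L_1$ and $L_2$ and applying Cauchy--Schwarz twice, whereas you invoke the general fact that a convex, norm-lower-semicontinuous functional is weakly lower semicontinuous (via Mazur), supplying norm lower semicontinuity of $Z_2$ by Fatou along an a.e.-convergent subsequence of the derivatives; both are valid, and your route has the small advantage of requiring no upper bound on $\J^{-1}$. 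One aside in your write-up is wrong, though harmless: a.e.\ boundedness of $\J$ gives $\lambda_{\min}(\J^{-1}(\bth)) \ge 1/K$, i.e.\ a \emph{lower} bound on $\J^{-1}$, not $\esssup_{\bth} \|\J^{-1}(\bth)\| < \infty$, so it does not by itself make $Z_2$ finite on $S$; fortunately, finiteness of $M_\ell$ is not needed for the compactness/semicontinuity argument, and your Fatou step, as you yourself note at the end, needs no bound on $\J^{-1}$ at all.
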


Note that $Z_3[\b]$ equals $Z[\b]$ of \eqref{eq:average CRB}; the notation $Z_3[\b]$ is introduced for simplicity. Also note that under mild regularity assumptions on $\J(\bth)$, uniqueness can be demonstrated for $\ell=2$ as well, but this is not necessary for our purposes.

\begin{proof}
The space $\HoTn$ is a Cartesian product of $n$ Sobolev spaces $H^1(\Theta)$, each of which is a separable Hilbert space \cite[\S3.7.1]{lebedev03}. Therefore, $\HoTn$ is also a separable Hilbert space. It follows from the Banach--Alaoglu theorem \cite[\S3.17]{rudin-FunctionalAnalysis} that all bounded sequences in $\HoTn$ have weakly convergent subsequences \cite[\S2.18]{lieb01}. Recall that a sequence $\f^{(1)}, \f^{(2)}, \ldots \in \HoTn$ is said to converge weakly to $\f^{(0)} \in \HoTn$ (denoted $\f^{(i)} \rightharpoonup \f^{(0)}$) if
\beq \label{eq:weak convergence}
L[\f^{(j)}] \rightarrow L[\f^{(0)}]
\eeq
for all continuous linear functionals $L[\cdot]$ \cite[\S2.9]{lieb01}.

Given a particular value $\ell \in \{1,2,3\}$, let $\b^{(i)}$ be a sequence of functions in $S$ such that $Z_\ell[\b^{(i)}] \rightarrow M_\ell$. This is a bounded sequence since $S$ is bounded, and therefore there exists a subsequence $\b^{(i_k)}$ which converges weakly to some $\bol \in \HoTn$. Furthermore, since $S$ is closed,\footnote{In fact, we require that $S$ be ``weakly closed'' in the sense that weakly convergent sequences in $S$ converge to an element in $S$. However, since $S$ is convex, this notion is equivalent to the ordinary definition of closure \cite[\S3.13]{rudin-FunctionalAnalysis}.} we have $\bol \in S$. We will now show that $Z_\ell[\bol] = M_\ell$.

To this end, it suffices to show that $Z_\ell[\cdot]$ is weakly lower semicontinuous, i.e., for any sequence $\f^{(i)} \in \HoTn$ which converges weakly to $\f^{(0)} \in \HoTn$, we must show that
\beq \label{eq:def weak lsc}
Z_\ell[\f^{(0)}] \le \liminf_{i\rightarrow\infty} Z_\ell[\f^{(i)}].
\eeq
Consider a weakly convergent sequence $\f^{(j)} \rightharpoonup \f^{(0)}$. Then, \eqref{eq:weak convergence} holds for any continuous linear functional $L[\cdot]$. Specifically, choose the continuous linear functional
\beq
L_1[\f] = \int_\Theta \f^{(0)}(\bth) \f(\bth) p_\bth(d\bth).
\eeq
We then have
\begin{align}
&Z_1[\f^{(0)}]
= L_1[\f^{(0)}] \notag\\
&= \lim_{j \rightarrow \infty} L_1[\f^{(j)}] \notag\\
&= \lim_{j \rightarrow \infty} \int_\Theta \sum_{i=1}^n f^{(0)}_i(\bth) f^{(j)}_i(\bth) p_\bth(d\bth) \notag\\
&\le \liminf_{j \rightarrow \infty} \sqrt{\int_\Theta \|\f^{(0)}(\bth)\|^2 p_\bth(d\bth) \cdot
     \int_\Theta \|\f^{(j)}(\bth)\|^2 p_\bth(d\bth)} \notag\\
&= \sqrt{Z_1[\f^{(0)}]} \liminf_{j \rightarrow \infty} \sqrt{Z_1[\f^{(j)}]}
\end{align}
where we have used the Cauchy--Schwarz inequality. It follows that
\beq
\sqrt{Z_1[\f^{(0)}]}
\le
\liminf_{j \rightarrow \infty}
\sqrt{Z_1[\f^{(j)}]}
\eeq
and therefore $Z_1[\f^{(0)}] \le \liminf_{j \rightarrow \infty} Z_1[\f^{(j)}]$, so that $Z_1[\cdot]$ is weakly lower semicontinuous.

Similarly, consider the continuous linear functional
\beq
L_2[\f] = \int_\Theta \Tr\!\left( \!
        \left(\I+\pd{\f^{(0)}}{\bth}\right) \! \J^{-1}(\bth) \! \left(\I+\pd{\f}{\bth}\right)^T
    \right) \! p_\bth(d\bth)
\eeq
for which we have
\begin{align} \label{eq:Z2prf}
&Z_2[\f^{(0)}] = L_2[\f^{(0)}] \notag\\
&= \lim_{j \rightarrow \infty} L_2[\f^{(j)}] \notag\\
&= \lim_{j \rightarrow \infty} \int_\Theta \Tr\BIG{22pt}[ \!
        \left(\I+\pd{\f^{(0)}}{\bth}\right) \J^{-1}(\bth) \notag\\
&\hspace{10em}       \cdot \left(\I+\pd{\f^{(j)}}{\bth}\right)^T \BIG{22pt}] p_\bth(d\bth).
\end{align}
Note that, for any positive definite matrix $\W$, $\Tr(\A \W {\boldsymbol B}^T)$ is an inner product of the two matrices $\A$ and ${\boldsymbol B}$. Therefore, by the Cauchy--Schwarz inequality,
\beq
\Tr(\A \W {\boldsymbol B}^T) \le \sqrt{ \Tr(\A \W \A^T) \Tr({\boldsymbol B} \W {\boldsymbol B}^T) }.
\eeq
Applying this to \eqref{eq:Z2prf}, we have
\begin{align}
Z_2&[\f^{(0)}] \le \liminf_{j \rightarrow \infty}
 \notag\\
\int_\Theta &\sqrt{\Tr\!\left( \left( \I+\pd{\f^{(0)}}{\bth} \right) \J^{-1}(\bth) \left( \I+\pd{\f^{(0)}}{\bth} \right)^T \right) } \notag\\
\cdot &
\sqrt{\Tr\!\left( \left( \I+\pd{\f^{(j)}}{\bth} \right) \J^{-1}(\bth) \left( \I+\pd{\f^{(j)}}{\bth} \right)^T \right) } p_\bth(d\bth).
\end{align}
Once again using the Cauchy--Schwarz inequality results in
\beq
Z_2[\f^{(0)}] \le \liminf_{j \rightarrow \infty} \sqrt{ Z_2[\f^{(0)}] Z_2[\f^{(j)}] }
\eeq
and therefore $Z_2[\f^{(0)}] \le \liminf_{j \rightarrow \infty} Z_2[\f^{(j)}]$, so that $Z_2[\cdot]$ is weakly lower semicontinuous. Since $Z_3[\f] = Z_1[\f] + Z_2[\f]$, it follows that $Z_3[\cdot]$ is also weakly lower semicontinuous.

Now recall that $\b^{(i_k)} \rightharpoonup \bol$ and $Z_\ell[\b^{(i_k)}] \rightarrow M_\ell$. By the definition \eqref{eq:def weak lsc} of lower semicontinuity, it follows that
\beq
Z_\ell[\bol] \le \liminf_{k \rightarrow \infty} Z_\ell[\b^{(i_k)}] = M_\ell
\eeq
and since $M_\ell$ is the infimum of $Z_\ell[\b]$, we obtain $Z[\bol] = M$. Thus $\bol$ is a minimizer of \eqref{eq:M ell}.

It remains to show that for $\ell \in \{1,3\}$, the minimizer of \eqref{eq:M ell} is unique.
To this end, we first show that $Z_1[\cdot]$ is strictly convex. Let $b^{(0)},b^{(1)} \in S$ be two essentially different functions, i.e.,
\beq
p_\bth\! \left( \left\{
    \bth \in \Theta : \b^{(0)}(\bth) \neq \b^{(1)}(\bth) \right\} \right) > 0.
\eeq
Let $\b^{(2)}(\bth) = \lambda \b^{(0)}(\bth) + (1-\lambda) \b^{(1)}(\bth)$ for some $0 < \lambda < 1$, so that $\b^{(2)} \in S$ by convexity. We then have
\begin{align}
Z_1[\b^{(2)}]
&= \int_Q
    \left\| \lambda \b^{(0)}(\bth) + (1-\lambda) \b^{(1)}(\bth) \right\|^2 p_\bth(d\bth) \notag\\
&+ \int_{\Theta \backslash Q}
    \left\| \lambda \b^{(0)}(\bth) + (1-\lambda) \b^{(1)}(\bth) \right\|^2 p_\bth(d\bth) \notag\\
&< \int_Q
    \left[ \lambda \|\b^{(0)}(\bth)\|^2 + (1-\lambda) \|\b^{(1)}(\bth)\|^2 \right] p_\bth(\bth) \notag\\
&+ \int_{\Theta \backslash Q}
    \left[ \lambda \|\b^{(0)}(\bth)\|^2 + (1-\lambda) \|\b^{(1)}(\bth)\|^2 \right] p_\bth(\bth) \notag\\
&= \lambda Z_1[\b^{(0)}] + (1-\lambda) Z_2[\b^{(1)}]
\end{align}
where the inequality follows from strict convexity of the squared Euclidean norm $\|\x\|^2$. Thus $Z_1[\cdot]$ is strictly convex, and hence has a unique minimum.

Note that $Z_3[\b] = Z_1[\b] + Z_2[\b]$. Since $Z_1[\cdot]$ is strictly convex and $Z_2[\cdot]$ is convex, it follows that $Z_3[\cdot]$ is strictly convex, and thus also has a unique minimum. This completes the proof.
\end{proof}

The following lemma can be thought of as a triangle inequality for a normed space of matrix functions over $\Theta$.

\begin{lemma} \label{le:triangle}
Let $p_\bth$ be a probability measure over $\Theta$, and let $\M : \Theta \rightarrow {\mathbb R}^{n \times n}$ be a matrix function. Suppose
\beq \label{eq:lem M given}
\int_\Theta \| \I + \M(\bth) \|_F^2 p_\bth(d\bth) \le \alpha
\eeq
for some constant $\alpha$. It follows that
\beq
\int_\Theta \|\M(\bth)\|_F^2 p_\bth(d\bth) \le (\sqrt{\alpha} + \sqrt{n})^2.
\eeq
\end{lemma}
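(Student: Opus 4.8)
The plan is to recognize the quantity $\bigl( \int_\Theta \|\M(\bth)\|_F^2\, p_\bth(d\bth) \bigr)^{1/2}$ as the norm of $\M$ in the Hilbert space of square-integrable ${\mathbb R}^{n\times n}$-valued functions on $(\Theta, p_\bth)$, and to read the assertion as the reverse triangle inequality applied to the decomposition $\M = (\I + \M) - \I$, followed by an explicit evaluation of the norm of the constant function $\I$.

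Concretely, I would first record the pointwise Frobenius triangle inequality $\|\M(\bth)\|_F \le \|\I + \M(\bth)\|_F + \|\I\|_F$, valid for every $\bth \in \Theta$, together with the identity $\|\I\|_F^2 = \Tr(\I \I^T) = n$. Squaring and integrating against $p_\bth$, I would then apply Minkowski's inequality in $L^2(\Theta, p_\bth)$ to the two functions $\bth \mapsto \|\I + \M(\bth)\|_F$ and the constant $\sqrt{n}$; since $p_\bth$ is a probability measure, the $L^2$-norm of the constant $\sqrt{n}$ is exactly $\sqrt{n}$. Combining this with the hypothesis \eqref{eq:lem M given}, namely $\int_\Theta \|\I + \M\|_F^2\, p_\bth(d\bth) \le \alpha$, the chain of inequalities gives
\beq
\Bigl( \int_\Theta \|\M(\bth)\|_F^2\, p_\bth(d\bth) \Bigr)^{1/2} \le \sqrt{\alpha} + \sqrt{n},
\eeq
and squaring yields the stated bound.

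I do not anticipate any genuine obstacle: the argument is essentially a one-line application of the $L^2$ triangle inequality. The only point meriting a word of justification is that every integral in sight is finite --- immediate, since $\int_\Theta \|\I+\M\|_F^2\, p_\bth(d\bth) \le \alpha < \infty$ by hypothesis and $\int_\Theta \|\I\|_F^2\, p_\bth(d\bth) = n < \infty$ --- so that Minkowski's inequality is legitimately in force and the resulting square-integrability of $\M$ is warranted. (As a self-contained alternative that avoids naming Minkowski, one may expand $\|\M(\bth)\|_F^2 = \|\I+\M(\bth)\|_F^2 - 2\Tr\!\bigl(\I+\M(\bth)\bigr) + n$ pointwise, integrate, bound $\bigl|\int_\Theta \Tr(\M)\, p_\bth(d\bth)\bigr| \le \sqrt{n}\,\bigl(\int_\Theta \|\M\|_F^2\, p_\bth(d\bth)\bigr)^{1/2}$ via Cauchy--Schwarz, and solve the resulting quadratic inequality in $t = \bigl(\int_\Theta \|\M\|_F^2\, p_\bth(d\bth)\bigr)^{1/2}$; this produces $t \le \sqrt{n} + \sqrt{\alpha}$ as well, but the $L^2$-triangle-inequality route is cleaner.)
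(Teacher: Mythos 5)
Your argument is correct and is essentially the paper's own proof: both start from the decomposition $\M = (\I+\M)-\I$ and the pointwise Frobenius triangle inequality with $\|\I\|_F^2=n$, and your invocation of Minkowski's inequality in $L^2(\Theta,p_\bth)$ is just the packaged form of what the paper does by hand (squaring, integrating, and bounding the cross term $\int\|\I+\M\|_F\,p_\bth(d\bth)$ via Cauchy--Schwarz/Jensen). No gap; the finiteness remark you include is the only technical point worth stating, and you state it.
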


\begin{proof}
By the triangle inequality,
\begin{align}
\left\|\M(\bth)\right\|_F
=   \left\|\M(\bth) + \I - \I\right\|_F
\le \left\|\M(\bth) + \I\right\|_F + \|\I\|_F .
\end{align}
Since $\|\I\|_F^2 = n$, we have
\begin{align}
&\int_\Theta \left\| \M(\bth) \right\|_F^2 p_\bth(d\bth)  \notag\\
&\le \int_\Theta \left[ \left\| \I + \M(\bth) \right\|_F^2 + n +
2\sqrt{n}\left\| \I + \M(\bth) \right\|_F \right] p_\bth(d\bth).
\end{align}
Using the fact that
\beq
\int_\Theta \left\| \I + \M(\bth) \right\|_F p_\bth(d\bth)
\le
\sqrt{ \int_\Theta \left\| \I + \M(\bth) \right\|_F^2 p_\bth(d\bth) }
\eeq
and combining with \eqref{eq:lem M given}, it follows that
\begin{align}
\int_\Theta \left\| \M(\bth) \right\|_F^2 p_\bth(d\bth)
&\le
\alpha + n + 2\sqrt{n \alpha}
\end{align}
which completes the proof.
\end{proof}

\section{Proof of Proposition~\ref{pr:uniq min}}
\label{ap:prf pr:uniq min}

The following proof of Proposition~\ref{pr:uniq min} makes use of the results developed in Appendix~\ref{ap:technical lemmas}.

\begin{proof}[Proof of Proposition~\ref{pr:uniq min}]
Recall that $Z_3[\b]$ of \eqref{eq:def Z1 Z2} equals $Z[\b]$. Thus, we would like to apply Lemma~\ref{le:uniq min gen} (with $\ell=3$) to prove the unique existence of a minimizer of \eqref{eq:min HoTn}. However, Lemma~\ref{le:uniq min gen} requires that the minimization be performed over a closed, bounded, and convex set $S$, whereas \eqref{eq:min HoTn} is performed over the unbounded set $\HoTn$. To resolve this issue, we must show that the minimization \eqref{eq:min HoTn} can be reformulated as a minimization over a closed, bounded, and convex set $S$.

To this end, note that
\beq
Z[{\bf 0}] = \int_\Theta \Tr(\J^{-1}(\bth)) p_\bth(d\bth) \triangleq U
\eeq
and therefore $M \le U < \infty$. Thus, it suffices to perform the minimization \eqref{eq:min HoTn} over those functions for which $Z[\b] \le U$. We now show that this can be achieved by minimizing over a closed, bounded, and convex set $S$. First, note that $Z[\b] \ge \|\b\|_\LtTn^2$, so that one may choose to minimize \eqref{eq:min HoTn} only over functions $\b$ for which
\beq \label{eq:limit LtTn norm}
\|\b\|_\LtTn^2 \le U.
\eeq
Similarly, we have
\beq
Z[\b] \ge
\int_\Theta
\Tr\!\left( \left(\I+\pd{\b}{\bth}\right) \J^{-1}(\bth) \left(\I+\pd{\b}{\bth}\right)^T \right)
p_\bth(d\bth)
\eeq
so that it suffices to minimize \eqref{eq:min HoTn} over functions $\b$ for which
\beq \label{eq:lem1prf1}
\int_\Theta
\Tr\!\left( \left(\I+\pd{\b}{\bth}\right) \J^{-1}(\bth) \left(\I+\pd{\b}{\bth}\right)^T \right)
p_\bth(d\bth)
\le U.
\eeq
Note that $\J(\bth)$ is bounded a.e., and therefore $\lambda_{\min}(\J^{-1}) \ge 1/K$ a.e., for some constant $K$. It follows that
\begin{multline}
\Tr\!\left( \left(\I+\pd{\b}{\bth}\right) \J^{-1}(\bth) \left(\I+\pd{\b}{\bth}\right)^T \right)\\
\ge \frac{1}{K} \left\| \I+\pd{\b}{\bth} \right\|_F^2\ \text{ a.e.}(p_\bth).
\end{multline}
Combining with \eqref{eq:lem1prf1} yields
\beq \label{eq:lem1prf2}
\int_\Theta \left\| \I + \pd{\b}{\bth} \right\|_F^2 p_\bth(d\bth) \le KU.
\eeq
From Lemma~\ref{le:triangle}, we then have
\begin{align} \label{eq:limit deriv norm}
\int_\Theta \left\| \pd{\b}{\bth} \right\|_F^2 p_\bth(d\bth)
&\le
\left( \sqrt{n} + \sqrt{KU} \right)^2.
\end{align}
From \eqref{eq:limit LtTn norm} and \eqref{eq:limit deriv norm} it follows that the minimization \eqref{eq:min HoTn} can be limited to the closed, bounded, convex set
\beq
S = \left\{ \b \in \HoTn : \|\b\|_\HoTn^2 \le U + \left( \sqrt{KU} + \sqrt{n} \right)^2 \right\}.
\eeq
Applying Lemma~\ref{le:uniq min gen} proves the unique existence of a minimizer of \eqref{eq:min HoTn}. The proof that $0 < s < \infty$ appears immediately after the statement of Proposition~\ref{pr:uniq min}.
\end{proof}

\section{Proof of Theorem~\ref{TH:E-L}}
\label{ap:prf th:e-l}

The following is the proof of Theorem~\ref{TH:E-L} concerning the calculation of the OBB\@.

\begin{proof}[Proof of Theorem~\ref{TH:E-L}]
Consider the more general problem of minimizing the functional
\beq
Z[\b] = \int_\Theta F[\b,\bth] d\bth
\eeq
where $F[\b,\bth]$ is smooth and convex in $\b: \Theta \rightarrow {\mathbb R}^n$, and $\Theta \subset {\mathbb R}^n$ is a bounded set with a smooth boundary $\Lambda$. Then, $Z[\b]$ is also smooth and convex in $\b$, so that $\b$ is a global minimum of $Z[\b]$ if and only if the differential $\delta Z[\h]$ equals zero at $\b$ for all admissible functions $\h: \Theta \rightarrow {\mathbb R}^n$ \cite{gelfand00}.

By a standard technique \cite[\S 35]{gelfand00}, it can be shown that
\begin{align} \label{eq:delta C}
&\delta Z[\h] =
\epsilon \sum_i \int_\Theta \left( \pd{F}{b_i} - \sum_j \pd{}{\theta_j} \pd{F}{b_i^{(j)}} \right) h_i(\bth) d\bth \notag\\
&\quad +
\epsilon \sum_i \int_\Lambda \left( \pd{F}{b_i^{(1)}}, \ldots, \pd{F}{b_i^{(n)}} \right)^T \bnu(\bth) \, h_i(\bth) \, d\sigma
\end{align}
where $\epsilon$ is an infinitesimal quantity, $b_i^{(j)} = \partial b_i / \partial \theta_j$, and $\bnu(\bth)$ is an outward-pointing normal at the boundary point $\bth \in \Lambda$.
We now seek conditions for which $\delta Z[\h] = 0$ for all $\h(\bth)$. Consider first functions $\h(\bth)$ which equal zero on the boundary $\Lambda$. In this case, the second integral vanishes, and we obtain the Euler--Lagrange equations
\beq \label{eq:E-L general}
\forall i, \ \
\pd{F}{b_i} - \sum_j \pd{}{\theta_j} \pd{F}{b_i^{(j)}} = 0.
\eeq
Substituting this result back into \eqref{eq:delta C}, and again using the fact that $\delta Z[\h]=0$ for all $\h$, we obtain the boundary condition
\beq \label{eq:boundary general}
\forall i,\ \forall \bth \in \Lambda, \ \
\left( \pd{F}{b_i^{(1)}}, \ldots, \pd{F}{b_i^{(n)}} \right)^T \bnu(\bth) = 0.
\eeq
Plugging $F[\b,\bth] = \CRB[\b,\bth] p_\bth(\bth)$ into \eqref{eq:E-L general} and \eqref{eq:boundary general} provides the required result.
\end{proof}

\section{Proof of Theorem~\ref{TH:SPH SYM}}
\label{ap:prf th:sph sym}

Before proving Theorem~\ref{TH:SPH SYM}, we provide the following two lemmas, which demonstrate some symmetry properties of the CRB\@.

\begin{lemma} \label{le:invariant}
Under the conditions of Theorem~\ref{TH:SPH SYM}, the functional $Z[\b]$ of \eqref{eq:average CRB} is rotation and reflection invariant, i.e., $Z[\b] = Z[\U \b]$ for any unitary matrix $\U$.
\end{lemma}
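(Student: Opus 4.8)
The plan is to reduce the claim to an orthogonal change of variables in the integral defining $Z[\cdot]$ and to verify that every factor of the integrand is left unchanged. First I would pin down the action: for a vector field $\b:\Theta\to\RR^n$ and a unitary (real orthogonal) matrix $\U$, with $\U^T\U=\I$ and $|\det\U|=1$, set $(\U\b)(\bth)\triangleq\U\,\b(\U^T\bth)$. Under the hypotheses of Theorem~\ref{TH:SPH SYM} the map $\bth\mapsto\U^T\bth$ carries the ball $\Theta=\{\bth:\|\bth\|<r\}$ onto itself with unit Jacobian, and since $\|\U^T\bth\|=\|\bth\|$ it leaves $q(\|\bth\|)$ — hence the measure $p_\bth(d\bth)$ — invariant, and likewise leaves $J(\|\bth\|)$ invariant.

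Next I would compute how the integrand $\CRB[\b,\bth]$ transforms under the action. The squared-bias term is immediate, $\|(\U\b)(\bth)\|^2=\|\U\,\b(\U^T\bth)\|^2=\|\b(\U^T\bth)\|^2$, since $\U$ is an isometry. For the gradient term, writing $(\U\b)_i(\bth)=\sum_k U_{ik}\,b_k(\U^T\bth)$ and using $\partial(\U^T\bth)_l/\partial\theta_j=U_{jl}$, the chain rule gives $\pd{(\U\b)}{\bth}(\bth)=\U\,\big(\pd{\b}{\bth}\big)(\U^T\bth)\,\U^T$, so that $\I+\pd{(\U\b)}{\bth}(\bth)=\U\,\big(\I+\pd{\b}{\bth}(\U^T\bth)\big)\,\U^T$. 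Because $\J^{-1}(\bth)=J^{-1}(\|\bth\|)\,\I$ is a scalar multiple of the identity, the trace term in $\CRB$ equals $J^{-1}(\|\bth\|)\,\big\|\I+\pd{(\U\b)}{\bth}(\bth)\big\|_F^2$, and the Frobenius norm is invariant under multiplication on the left or right by $\U$; hence this term equals $J^{-1}(\|\U^T\bth\|)\,\big\|\I+\pd{\b}{\bth}(\U^T\bth)\big\|_F^2$.

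Combining these facts, I would substitute $\bth'=\U^T\bth$ in $Z[\U\b]=\int_\Theta\CRB[\U\b,\bth]\,p_\bth(d\bth)$: every factor of the integrand becomes the corresponding factor evaluated at $\bth'$, the region of integration is again $\Theta$, and the Jacobian is $|\det\U|=1$, yielding $Z[\U\b]=Z[\b]$. The same isometry computation also shows $\|\U\b\|_{\LtTn}=\|\b\|_{\LtTn}$ and $\big\|\pd{(\U\b)}{\bth}\big\|_{\LtTn}=\big\|\pd{\b}{\bth}\big\|_{\LtTn}$, so $\U\b\in\HoTn$ if and only if $\b\in\HoTn$, which makes the statement consistent with $Z$ being defined on $\HoTn$.

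The only mildly delicate point is that the bias functions here are merely weakly differentiable, so the chain-rule identity $\pd{(\U\b)}{\bth}(\bth)=\U\,\pd{\b}{\bth}(\U^T\bth)\,\U^T$ must be read in the weak sense; this is the standard fact that precomposition with a linear isometry commutes with weak differentiation (producing the obvious linear factor), so it presents no real obstacle. I expect the index bookkeeping in the chain-rule step to be the most error-prone part, but it is entirely routine.
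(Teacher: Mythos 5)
Your proof is correct, and it reaches the paper's conclusion by a noticeably more streamlined route built on the same underlying mechanism (an orthogonal change of variables together with invariance of the radial weight and of the Frobenius-type integrand). The paper restricts to a rotation $\R_\phi$ acting in a single coordinate plane, verifies the pointwise identity for the integrand by explicit index-by-index computations, invokes the decomposition of a general rotation into two-coordinate rotations, and then handles reflections through a hyperplane as a separate case; you instead prove the single matrix identity $\pd{(\U\b)}{\bth}(\bth)=\U\,\pd{\b}{\bth}(\U^T\bth)\,\U^T$, use that $\J^{-1}=J^{-1}(\|\bth\|)\I$ turns the trace term into $J^{-1}\bigl\|\I+\pd{(\U\b)}{\bth}\bigr\|_F^2$ with the Frobenius norm invariant under orthogonal conjugation, and then one unit-Jacobian substitution covers every orthogonal $\U$, rotations and reflections alike. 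A genuine contribution of your write-up is that you pin down the action as $(\U\b)(\bth)=\U\b(\U^T\bth)$: the paper's notation $\R_\phi\b$ reads as a purely pointwise rotation of the values, but under that reading the claim would be false (the cross term $\Tr\bigl(\pd{\b}{\bth}\bigr)$ in $\|\I+\pd{\b}{\bth}\|_F^2$ is not preserved by $\pd{\b}{\bth}\mapsto\R\,\pd{\b}{\bth}$), and it is only the conjugated action--which the paper implicitly adopts by pairing the pointwise rotation with the change of variables $\tilde\bth=\R_{-\phi}\bth$--that both makes the lemma true and supports its use in deducing the radial form of the optimal bias in Theorem~\ref{TH:SPH SYM}. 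Your remarks that the chain rule holds in the weak sense for precomposition with a linear isometry and that $\U\b\in\HoTn$ iff $\b\in\HoTn$ close the only technical gaps, so nothing is missing.
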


\begin{proof}
We first demonstrate that $Z[\b]$ is rotation invariant. From the definitions of $Z[\b]$ and $\CRB[\b,\bth]$, we have
\begin{align} \label{eq:le:1}
Z[\b]
&= \int_\Theta \Tr\!\left[\left(\I+\pd{\b}{\bth}\right) \left(\I+\pd{\b}{\bth}\right)^T \right] \frac{q(\|\bth\|)}{J(\|\bth\|)} d\bth \notag\\
&+ \int_\Theta \|\b(\bth)\|^2 q(\|\bth\|) d\bth.
\end{align}
The second integral is clearly rotation invariant, since a rotation of $\b$ does not alter its norm. It remains to show that the first integral, which we denote by $I_1[\b]$, does not change when $\b$ is rotated. To this end, we begin by considering a rotation about the first two coordinates, such that $\b$ is transformed to $\tilde{\b} \triangleq \R_\phi \b$, where the rotation matrix $\R_\phi$ is defined such that
\begin{align}
\R_\phi \b =
(&b_1 \cos\phi + b_2 \sin\phi, \notag\\
&-b_1 \sin\phi + b_2 \cos\phi,
b_3, \ldots, b_n)^T.
\end{align}
We must thus show that $I_1[\b] = I_1[\tilde{\b}]$. Let us perform the change of variables $\bth \mapsto \tilde{\bth}$, where $\tilde{\bth} = \R_{(-\phi)} \bth$. Rewriting the trace in \eqref{eq:le:1} as a sum, we have
\beq \label{eq:le:2}
I_1[\tilde{\b}] = \int_\Theta
\sum_{i,j} \left( \delta_{ij} + \pd{\tilde{b}_i}{\theta_j} \right)^2
\frac{q(\|\tilde{\bth}\|)}{J(\|\tilde{\bth}\|)} d\tilde{\bth}
\eeq
where we have used the facts that $\|\bth\|=\|\tilde{\bth}\|$ and that $\Theta$ does not change under the change of variables.

We now demonstrate some properties of the transformation of $\b$ and $\bth$. First, we have, for any $j$,
\begin{align} \label{eq:le:inv1}
   \left( \pd{\tilde{b}_1}{\theta_j} \right)^2
 + \left( \pd{\tilde{b}_2}{\theta_j} \right)^2
&= \left( \pd{b_1}{\theta_j} \cos\phi + \pd{b_2}{\theta_j} \sin\phi \right)^2 \notag\\
&\quad+ \left(-\pd{b_1}{\theta_j} \sin\phi + \pd{b_2}{\theta_j} \cos\phi \right)^2 \notag\\
&= \left( \pd{b_1}{\theta_j} \right)^2 + \left( \pd{b_2}{\theta_j} \right)^2.
\end{align}
Also, for any $i$,
\begin{align} \label{eq:le:inv2}
   \left( \pd{b_i}{\tilde{\theta}_1} \right)^2
 + \left( \pd{b_i}{\tilde{\theta}_2} \right)^2
&= \left( \pd{b_i}{\theta_1} \pd{\theta_1}{\tilde{\theta}_1}
        + \pd{b_i}{\theta_2} \pd{\theta_2}{\tilde{\theta}_1} \right)^2 \notag\\
&\quad + \left( \pd{b_i}{\theta_1} \pd{\theta_1}{\tilde{\theta}_2}
        + \pd{b_i}{\theta_2} \pd{\theta_2}{\tilde{\theta}_2} \right)^2 \notag\\
&= \left( \pd{b_i}{\theta_1} \right)^2
 + \left( \pd{b_i}{\theta_2} \right)^2
\end{align}
where we used the fact that $\bth = \R_{\phi}\tilde{\bth}$. Third, we have
\begin{align}
\pd{\tilde{b}_1}{\theta_1} &= \pd{b_1}{\tilde{\theta}_1} \cos^2\phi
                            + \pd{b_1}{\tilde{\theta}_2} \sin\phi \cos\phi \notag\\
                           &\quad+ \pd{b_2}{\tilde{\theta}_1} \sin\phi \cos\phi
                            + \pd{b_2}{\tilde{\theta}_2} \sin^2\phi, \notag\\
\pd{\tilde{b}_2}{\theta_2} &= \pd{b_1}{\tilde{\theta}_1} \sin^2\phi
                            - \pd{b_1}{\tilde{\theta}_2} \sin\phi \cos\phi \notag\\
                           &\quad- \pd{b_2}{\tilde{\theta}_1} \sin\phi \cos\phi
                            + \pd{b_2}{\tilde{\theta}_2} \cos^2\phi,
\end{align}
so that
\beq \label{eq:le:inv3}
\pd{\tilde{b}_1}{\theta_1} + \pd{\tilde{b}_2}{\theta_2}
 = \pd{b_1}{\tilde{\theta}_1} + \pd{b_2}{\tilde{\theta}_2}.
\eeq
We now show that
\beq \label{eq:le:sum invariant}
  \sum_{i,j} \left( \delta_{ij} + \pd{\tilde{b}_i}{\theta_j} \right)^2
= \sum_{i,j} \left( \delta_{ij} + \pd{b_i}{\tilde{\theta}_j} \right)^2.
\eeq
For terms with $i,j \ge 3$, we have $b_i = \tilde{b}_i$ and $\theta_j = \tilde{\theta}_j$, so that replacing $\tilde{\b}$ with $\b$ and $\bth$ with $\tilde{\bth}$ does not change the result. The terms with $i=1,2$ and $j \ge 3$ do not change because of \eqref{eq:le:inv1}, while the terms with $i \ge 3$ and $j = 1,2$ do not change because of \eqref{eq:le:inv2}. It remains to show that the terms $i,j=1,2$ do not modify the sum. To this end, we write out these four terms as
\begin{align}
 & \left( 1 + \pd{\tilde{b}_1}{\theta_1} \right)^2
 + \left( 1 + \pd{\tilde{b}_2}{\theta_2} \right)^2
 + \left( \pd{\tilde{b}_1}{\theta_2} \right)^2
 + \left( \pd{\tilde{b}_2}{\theta_1} \right)^2 \notag\\
&\quad
 = 2 + 2\pd{\tilde{b}_1}{\theta_1} + 2\pd{\tilde{b}_2}{\theta_2} \notag\\
&\qquad + \left( \pd{\tilde{b}_1}{\theta_1} \right)^2
 + \left( \pd{\tilde{b}_1}{\theta_2} \right)^2
 + \left( \pd{\tilde{b}_2}{\theta_1} \right)^2
 + \left( \pd{\tilde{b}_2}{\theta_2} \right)^2 \notag\\
&\quad
 = 2 + 2\pd{{b}_1}{\tilde\theta_1} + 2\pd{{b}_2}{\tilde\theta_2} \notag\\
&\qquad + \left( \pd{{b}_1}{\tilde\theta_1} \right)^2
 + \left( \pd{{b}_1}{\tilde\theta_2} \right)^2
 + \left( \pd{{b}_2}{\tilde\theta_1} \right)^2
 + \left( \pd{{b}_2}{\tilde\theta_2} \right)^2 \notag\\
&\quad
 = \left( 1 + \pd{{b}_1}{\tilde\theta_1} \right)^2
 + \left( 1 + \pd{{b}_2}{\tilde\theta_2} \right)^2
 + \left( \pd{{b}_1}{\tilde\theta_2} \right)^2
 + \left( \pd{{b}_2}{\tilde\theta_1} \right)^2
\end{align}
where, in the second transition, we have used \eqref{eq:le:inv1}, \eqref{eq:le:inv2}, and \eqref{eq:le:inv3}. It follows that $I_1[\tilde{\b}]$ of \eqref{eq:le:2} is equal to $I_1[\b]$, and hence $Z[\b] = Z[\tilde{\b}]$. The result similarly holds for rotations about any other two coordinates. Since any rotation can be decomposed into a sequence of two-coordinate rotations, we conclude that $Z[\b]$ is rotation invariant.

Next, we prove that $Z[\b]$ is invariant to reflections through hyperplanes containing the origin. Since $Z[\b]$ is invariant to rotations, it suffices to choose a single hyperplane, say $\{\bth: \theta_1 = 0\}$. Let
\beq
\tilde{\b} \triangleq (-b_1(\bth), b_2(\bth), \ldots, b_n(\bth))^T
\eeq
be the reflection of $\b$, and consider the corresponding change of variables
\beq
\tilde{\bth} \triangleq (-\theta_1, \theta_2, \ldots, \theta_n)^T.
\eeq
By the symmetry assumptions, $p_\bth$ and $\J$ are unaffected by the change of variables; furthermore, $\partial \tilde{\b} / \partial \tilde{\bth} = \partial \b / \partial \bth$. It follows that $\CRB[\tilde{\b},\tilde{\bth}] = \CRB[\b,\bth]$, and therefore $Z[\b] = Z[\tilde{\b}]$.
\end{proof}

\begin{lemma} \label{le:invariant2}
Suppose $\b(\bth)$ is radial and rotation invariant, i.e., $\b(\bth) = t(\|\bth\|^2) \bth$ for some function $t \in \HoTn$. Also suppose that $\J(\bth) = J(\|\bth\|) \I$, where $J(\cdot)$ is a scalar function. Then, $\CRB[\b,\bth]$ of \eqref{eq:CRB} is rotation invariant in $\bth$, i.e., $\CRB[\b,\R\bth] = \CRB[\b,\bth]$ for any rotation matrix $\R$.
\end{lemma}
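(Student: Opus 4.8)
The plan is to split $\CRB[\b,\bth]$ of \eqref{eq:CRB} into its two additive pieces,
\[
\CRB[\b,\bth] = \Tr\!\left[\left(\I+\pd{\b}{\bth}\right) \J^{-1}(\bth) \left(\I+\pd{\b}{\bth}\right)^T\right] + \|\b(\bth)\|^2,
\]
and to show that each piece is unchanged when $\bth$ is replaced by $\R\bth$, for an arbitrary rotation matrix $\R$. The two facts that make this work are the \emph{equivariance} of the bias function and of the Fisher information under rotations, together with the invariance of the Euclidean and Frobenius norms under conjugation by an orthogonal matrix. So first I would establish these equivariance properties, and then assemble the result.

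The squared-bias term is immediate: since $\b(\bth) = t(\|\bth\|^2)\bth$ and $\|\R\bth\| = \|\bth\|$, we get $\b(\R\bth) = t(\|\bth\|^2)\R\bth = \R\b(\bth)$, hence $\|\b(\R\bth)\|^2 = \|\R\b(\bth)\|^2 = \|\b(\bth)\|^2$. For the trace term I would compute the Jacobian explicitly: writing $u = \|\bth\|^2$, differentiation gives $\partial b_i/\partial\theta_j = t(u)\delta_{ij} + 2t'(u)\theta_i\theta_j$, i.e.
\[
\pd{\b}{\bth} = t(u)\,\I + 2t'(u)\,\bth\bth^T .
\]
Evaluating at $\R\bth$ and using $\|\R\bth\|^2 = u$ and $(\R\bth)(\R\bth)^T = \R\,\bth\bth^T\R^T$ yields $\partial\b/\partial\bth\big|_{\R\bth} = \R\big(\partial\b/\partial\bth\big|_{\bth}\big)\R^T$, so that $\I + \partial\b/\partial\bth\big|_{\R\bth} = \R\big(\I + \partial\b/\partial\bth\big|_{\bth}\big)\R^T$. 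Likewise the hypothesis $\J(\bth) = J(\|\bth\|)\I$ gives $\J^{-1}(\R\bth) = J^{-1}(\|\bth\|)\I = \R\,\J^{-1}(\bth)\,\R^T$. Substituting all of this into the trace term and using the cyclic property of the trace together with $\R^T\R = \I$ collapses every conjugating factor, leaving exactly the trace term evaluated at $\bth$. Adding the two pieces gives $\CRB[\b,\R\bth] = \CRB[\b,\bth]$.

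The only point needing care is that $t$ is merely weakly differentiable ($t \in \HoTn$), so the product/chain rule used to compute $\partial\b/\partial\bth$ must be justified in the weak sense; this is handled by the standard density argument for Sobolev functions, or, more directly, by noting that the identity $\partial\b/\partial\bth\big|_{\R\bth} = \R\big(\partial\b/\partial\bth\big|_{\bth}\big)\R^T$ is precisely the transformation rule for weak derivatives under the linear change of variables $\bth \mapsto \R\bth$, whose Jacobian is the constant orthogonal matrix $\R$. Apart from this, everything is a routine computation, so I do not anticipate a genuine obstacle.
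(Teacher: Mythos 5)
Your proof is correct, and it differs from the paper's in the way it handles the trace term. The paper also starts from the explicit Jacobian $\partial b_i/\partial\theta_j = t\,\delta_{ij} + 2t'\,\theta_i\theta_j$, but then expands the trace entrywise to obtain $n(1+t)^2 + 4t'^2\|\bth\|^4 + 4(1+t)t'\|\bth\|^2$, concluding that $\CRB[\b,\bth]$ is a function of $\|\bth\|^2$ alone and hence rotation invariant. You instead argue structurally: $\b(\R\bth)=\R\b(\bth)$, $\partial\b/\partial\bth\big|_{\R\bth}=\R\bigl(\partial\b/\partial\bth\big|_{\bth}\bigr)\R^T$, and $\J^{-1}(\R\bth)=\R\J^{-1}(\bth)\R^T$, after which orthogonal invariance of the Euclidean norm and cyclicity of the trace finish the job. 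Both routes are routine; the paper's yields as a by-product a closed-form radial expression for the CRB (though it recomputes this differently in the proof of Theorem~\ref{TH:SPH SYM}), while your conjugation argument is shorter, generalizes immediately to any $\J$ satisfying the equivariance $\J(\R\bth)=\R\J(\bth)\R^T$ rather than only scalar multiples of $\I$, and—via your closing remark that the conjugation identity follows from $\b(\R\bth)=\R\b(\bth)$ together with the transformation rule for weak derivatives under the linear change of variables $\bth\mapsto\R\bth$—also deals explicitly with the weak-differentiability issue that the paper passes over silently.
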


\begin{proof}
We will show that $\CRB[\b,\bth]$ depends on $\bth$ only through $\|\bth\|^2$, and is therefore rotation invariant. For the given value of $\b(\bth)$ and $\J(\bth)$, we have
\begin{align}
&\CRB[\b,\bth] \notag\\
&= \|\b(\bth)\|^2 + \Tr\left[ \left( \I + \pd{\b}{\bth} \right) \J^{-1}(\bth)
                              \left( \I + \pd{\b}{\bth} \right)^T \right] \notag\\
&= t^2 \|\bth\|^2 + \frac{1}{J(\|\bth\|)}
                    \Tr\left[ \left( \I + \pd{t\bth}{\bth} \right)
                              \left( \I + \pd{t\bth}{\bth} \right)^T \right]
\end{align}
where, for notational convenience, we have omitted the dependence of $t$ on $\|\bth\|^2$. It remains to show that the trace in the above expression is a function of $\bth$ only through $\|\bth\|^2$. To this end, we note that
\begin{align}
\pd{b_i}{\theta_j}
&= t \delta_{ij} + t' \theta_i \pd{\|\bth\|^2}{\theta_j}
 = t \delta_{ij} + 2 t' \theta_i \theta_j
\end{align}
where $\delta_{ij}$ is the Kronecker delta. It follows that
\beq
\left( \delta_{ij} + \pd{b_i}{\theta_j} \right)^2
= (1+t)^2 \delta_{ij} + 4(1+t) t' \theta_i \theta_j \delta_{ij}
  + 4 t'^2 \theta_i^2 \theta_j^2.
\eeq
Therefore
\begin{align}
\Tr&\left[ \left( \I + \pd{\b}{\bth} \right)
                              \left( \I + \pd{\b}{\bth} \right)^T \right]
= \sum_{i,j} \left( \delta_{ij} + \pd{b_i}{\theta_j} \right)^2 \notag\\
&= n(1+t)^2 + 4t'^2 \sum_{i,j} \theta_i^2 \theta_j^2 + 4(1+t)t' \sum_i \theta_i^2 \notag\\
&= n(1+t)^2 + 4t'^2 \|\bth\|^4 + 4(1+t)t' \|\bth\|^2.
\end{align}
Thus, $\CRB[\b,\bth]$ depends on $\bth$ only through $\|\bth\|^2$, completing the proof.
\end{proof}

\begin{proof}[Proof of Theorem~\ref{TH:SPH SYM}]
We have seen in Theorem~\ref{TH:E-L} that the solution of \eqref{eq:th:biased bayesian CRB} is unique. Now suppose that the optimum $\b$ is not rotation invariant, i.e., there exists a rotation matrix $\R$ such that $\R \b(\bth)$ is not identical to $\b(\bth)$. By Lemma~\ref{le:invariant}, $\R \b(\bth)$ is also optimal, which is a contradiction.

Furthermore, suppose that $\b$ is not radial, i.e., for some value of $\bth$, $\b(\bth)$ contains a component perpendicular to the vector $\bth$. Consider a hyperplane passing through the origin, whose normal is the aforementioned perpendicular component. By Lemma~\ref{le:invariant}, The reflection of $\b$ through this hyperplane is also an optimal solution of \eqref{eq:th:biased bayesian CRB}, which is again a contradiction. Therefore, the optimum $\b$ is spherically symmetric and radial, so that it can be written as
\beq \label{eq:th3 opt bias}
\b(\bth) =
b(\|\bth\|) \frac{\bth}{\|\bth\|}   
\eeq
where $b(\cdot)$ is a scalar function.

To determine the value of $b(\cdot)$, it suffices to analyze the differential equation \eqref{eq:th:E-L} along a straight line from the origin to the boundary. We choose a line along the $\theta_1$ axis, and begin by calculating the derivatives of $b_1(\bth)$, $q(\|\bth\|)$, and $\J(\|\bth\|)$ along this axis. The derivative of $q(\|\bth\|)$ is given by
\beq
\pd{q}{\theta_j} = q'(\rho) \frac{\theta_j}{\rho}
\eeq
where we have denoted $\rho = \|\bth\|$, so that $\rho$ is weakly differentiable and
\beq
\pd{\rho}{\theta_j} = \frac{\theta_j}{\rho}.
\eeq
Along the $\theta_1$ axis, we have $\theta_1 = \rho$ while $\theta_2 = \cdots = \theta_n = 0$, so that
\beq
\left. \pd{q}{\theta_j} \right|_{\bth = \rho\e_1} = q'(\rho) \delta_{j1}.
\eeq
Similarly, since $\J(\bth) = J(\rho) \I$,
\beq
\pd{(\J^{-1})_{jk}}{\theta_j} = - \frac{J'(\rho)}{J^2(\rho)} \frac{\theta_j}{\rho} \delta_{jk}
\eeq
so that along the $\theta_1$ axis
\beq
\left. \pd{(\J^{-1})_{jk}}{\theta_j} \right|_{\bth = \rho\e_1} = - \frac{J'(\rho)}{J^2(\rho)} \delta_{jk} \delta_{j1}.
\eeq
From \eqref{eq:th3 opt bias}, we have
\beq \label{eq:b first deriv}
\pd{b_i}{\theta_j}
 = b'(\rho) \frac{\theta_i \theta_j}{\rho^2}
 + \frac{b(\rho)}{\rho} \left( \delta_{ij} - \frac{\theta_i \theta_j}{\rho^2} \right).
\eeq
Thus, on the $\theta_1$ axis, we have
\begin{align}
\left. \pd{b_1}{\theta_j} \right|_{\bth = \rho\e_1} &= b'(\rho) \delta_{j1}.
\end{align}
The second derivative of $b_i(\bth)$ can be shown to equal
\begin{align}
&\pdd{b_i}{\theta_j}{\theta_k}
 = b''(\rho) \frac{\theta_i \theta_j \theta_k}{\rho^3}\notag\\
&+ \left( \frac{b'(\rho)}{\rho} - \frac{b(\rho)}{\rho^2} \right)
   \left( \frac{\theta_i}{\rho} \delta_{jk}
        + \frac{\theta_j}{\rho} \delta_{ik}
        + \frac{\theta_k}{\rho} \delta_{ij}
        - 3 \frac{\theta_i \theta_j \theta_k}{\rho^3} \right).
\end{align}
Therefore, on the $\theta_1$ axis
\begin{align}
\left. \frac{\partial^2 b_1}{\partial \theta_1^2} \right|_{\bth = \rho\e_1}
&= b''(\rho) \notag\\
\left. \frac{\partial^2 b_1}{\partial \theta_j^2} \right|_{\bth = \rho\e_1}
&= \frac{b'(\rho)}{\rho}-\frac{b(\rho)}{\rho^2} & (j &\ne 1) \notag\\
\left. \pdd{b_1}{\theta_j}{\theta_k} \right|_{\bth = \rho\e_1}
&= 0 & (j,k &\ne 1).
\end{align}
Substituting these derivatives into \eqref{eq:th:E-L}, we obtain
\begin{align}
q(\rho) b(\rho)
&= \frac{q(\rho)}{J(\rho)} \left(b''(\rho) + (n-1)\frac{b'(\rho)}{\rho}
                                   - (n-1)\frac{b(\rho)}{\rho^2} \right) \notag\\
&+ (1 + b'(\rho)) \left( \frac{q'(\rho)}{J(\rho)}
                        - q(\rho)\frac{J'(\rho)}{J^2(\rho)} \right)
\end{align}
which is equivalent to \eqref{eq:th:sph sym ode}.

To obtain the boundary conditions, observe that Lemma~\ref{le:invariant} implies $\b({\bf 0}) = {\bf 0}$, whence we conclude that $b(0)=0$. Next, evaluate the boundary condition \eqref{eq:th:E-L:boundary} at boundary point $\bth = r \e_1$, where the surface normal $\bnu(\bth)$ equals $\e_1$, so that
\beq
1 + b'(\rho) = 1 + \pd{b_1}{\theta_1} = 0, \quad \bth = r \e_1
\eeq
which is equivalent to the boundary condition $b'(r)=-1$.

To find the OBB \eqref{eq:th:sph sym}, we must now calculate $Z[\b]$ for the obtained bias function \eqref{eq:th3 opt bias}. To this end, note that, by Lemma~\ref{le:invariant2}, $\CRB[\b,\bth]$ is rotation invariant in $\bth$ for the required $\b(\bth)$. Thus, the integrand $\CRB[\b,\bth] q(\|\bth\|)$ is constant on any $(n-1)$-sphere centered on the origin, so that
\beq \label{eq:th3 Zb}
Z[\b] = \int_0^r \CRB[\b, \rho \e_1] q(\rho) S_n(\rho) d\rho
\eeq
where
\beq
S_n(\rho) = \frac{2 \pi^{n/2}}{\Gamma(n/2)} \rho^{n-1}
\eeq
is the hypersurface area of an $(n-1)$-sphere of radius $\rho$ \cite{vinogradov95}. It thus suffices to calculate the value of $\CRB[\b,\bth]$ at points along the $\theta_1$ axis. From \eqref{eq:b first deriv}, it follows that
\beq
\left. \pd{\b}{\bth} \right|_{\bth = \rho \e_1}
 = \diag\left( b'(\rho), \frac{b(\rho)}{\rho}, \ldots, \frac{b(\rho)}{\rho} \right).
\eeq
Substituting this into the definition of $\CRB[\b,\bth]$, we obtain
\begin{align} \label{eq:th3 almost done}
&\CRB[\b,\rho \e_1] \notag\\
& = b^2(\rho)
 + \frac{1}{J(\rho)} (1 + b'(\rho))^2
 + \frac{n-1}{J(\rho)} \left( 1 + \frac{b(\rho)}{\rho} \right)^2.
\end{align}
Combining \eqref{eq:th3 almost done} with \eqref{eq:th3 Zb} yields \eqref{eq:th:sph sym}, as required.
\end{proof}

\section{Proofs of Asymptotic Properties}
\label{ap:asymp}

Theorems \ref{th:low SNR} and \ref{th:high SNR} demonstrate asymptotic tightness of the OBB\@. The proofs of these two theorems follow.

\begin{proof}[Proof of Theorem~\ref{th:low SNR}]
We begin the proof by studying a certain optimization problem, whose relevance will be demonstrated shortly. Let $t \ge 0$ be a constant and consider the problem
\begin{align} \label{eq:def u(t)}
u(t) = \inf_{\b \in \HoTn} &\int_\Theta \left\| \I + \pd{\b}{\bth} \right\|_F^2 p_\bth(d\bth) \notag\\
       \text{s.t. } & \int_\Theta \|\b(\bth)\|^2 p_\bth(d\bth) \le t.
\end{align}
Notice that $u(t) \le n$ for all $t$, since an objective having a value of $n$ is achieved by the function $\b(\bth) = {\bf 0}$. Thus, it suffices to perform the minimization \eqref{eq:def u(t)} over functions $\b \in \HoTn$ satisfying
\beq \label{eq:lsnr prf 1}
\int_\Theta \left\| \I + \pd{\b}{\bth} \right\|_F^2 p_\bth(d\bth) \le n.
\eeq
It follows from Lemma~\ref{le:triangle} that such functions also satisfy
\beq \label{eq:lsnr prf 2}
\int_\Theta \left\| \pd{\b}{\bth} \right\|_F^2 p_\bth(d\bth) \le (2\sqrt{n})^2 = 4n.
\eeq
Therefore, \eqref{eq:def u(t)} is equivalent to the minimization
\beq \label{eq:lsnr prf 22}
u(t) = \inf_{\b \in S_t} \int_\Theta \left\| \I + \pd{\b}{\bth} \right\|_F^2 p_\bth(d\bth)
\eeq
where
\begin{align}
S_t = \bigg\{ \b \in \HoTn : & \int_\Theta \|\b(\bth)\|^2 p_\bth(d\bth) \le t, \notag\\
                           & \int_\Theta \left\| \pd{\b}{\bth} \right\|_F^2 p_\bth(d\bth) \le 4n \bigg\}.
\end{align}
The set $S_t$ is convex, closed, and bounded in $\HoTn$. Applying Lemma~\ref{le:uniq min gen} (with $\ell=2$) implies that there exists a function $\bopt \in S_t$ which minimizes \eqref{eq:lsnr prf 22}, and hence also minimizes \eqref{eq:def u(t)}.

Note that the objective in \eqref{eq:def u(t)} is zero if and only if
\beq \label{eq:lsnr prf 23}
\pd{\bopt}{\bth} = -\I \quad \text{a.e. }(p_\bth).
\eeq
The only functions in $\HoTn$ satisfying this requirement are the functions
\beq \label{eq:lsnr prf 24}
\b(\bth) = \k - \bth \quad \text{a.e. } (p_\bth)
\eeq
for some constant $\k \in {\mathbb R}^n$. Let $\bmu \triangleq \E{\bth}$ and define
\beq \label{eq:def v}
v \triangleq \E{\|\bth - \E{\bth}\|^2}.
\eeq
For functions of the form \eqref{eq:lsnr prf 24}, the constraint of \eqref{eq:def u(t)} is given by
\begin{align} \label{eq:lsnr prf 25}
   \int_\Theta \|\k-\bth\|^2 p_\bth(d\bth)
&= \int_\Theta \|\k-\bmu + \bmu-\bth\|^2 p_\bth(d\bth)  \notag\\
&= \|\k-\bmu\|^2 + v                                    \notag\\
&\ge v.
\end{align}
In \eqref{eq:lsnr prf 25}, equality is obtained if and only if $\k = \bmu$. Therefore, if $t<v$, no functions satisfying \eqref{eq:lsnr prf 23} are feasible, and thus
\begin{align} \label{eq:lsnr prf 3}
u(t) &= 0 \quad \text{if } t \ge v, \notag\\
u(t) &> 0 \quad \text{if } t < v.
\end{align}

We now return to the setting of Theorem~\ref{th:low SNR}. We must show that $\beta_N \rightarrow v$ as $N \rightarrow \infty$. We denote functions corresponding to the problem of estimating $\bth$ from $\x^{(N)}$ with a superscript $(N)$. Thus, for example, $Z^{(N)}[\b]$ denotes the functional $Z[\b]$ of \eqref{eq:average CRB} for the problem corresponding to the measurement vector $\x^{(N)}$.

Since all eigenvalues of $\J^{(N)}(\bth)$ decrease monotonically with $N$ for $p_\bth$-almost all $\bth$, we have
\beq
\CRB^{(N)}[\b,\bth] \le \CRB^{(N+1)}[\b,\bth]
\eeq
for any $\b \in \HoTn$, for $p_\bth$-almost all $\bth$, and for all $N$. Therefore
\beq
Z^{(N)}[\b] \le Z^{(N+1)}[\b].
\eeq
for any $\b \in \HoTn$ and for all $N$. It follows that for all $N$
\beq
\beta_N = \min_{\b \in \HoTn} Z^{(N)}[\b] \le
          \min_{\b \in \HoTn} Z^{(N+1)}[\b] = \beta_{N+1}
\eeq
so that $\beta_N$ is a non-decreasing sequence. Furthermore, note that
\beq
Z^{(N)}[\bmu-\bth] = v \quad \text{for all }N
\eeq
where $v$ is given by \eqref{eq:def v}. Therefore, $\beta_N \le v$ for all $N$. Thus $\beta_N$ converges to some value $q$, and we have
\beq \label{eq:betaN bounded}
\beta_N \le q \le v \quad \text{for all }N.
\eeq
To prove the theorem, it remains to show that $q=v$.

Let $\b^{(N)}$ be the minimizer of \eqref{eq:min HoTn} when $\bth$ is estimated from $\x^{(N)}$; this minimizer exists by virtue of Proposition~\ref{pr:uniq min}. We then have
\beq
\beta_N = Z^{(N)}[\b^{(N)}] \le q
\eeq
and therefore
\beq
\int_\Theta \|\b^{(N)}(\bth)\|^2 p_\bth(d\bth) \le q.
\eeq
It follows that $\b^{(N)}$ satisfies the constraint of the optimization problem \eqref{eq:def u(t)} with $t=q$. As a consequence, we have
\beq
\int_\Theta \left\| \I + \pd{\b^{(N)}}{\bth} \right\|_F^2 p_\bth(d\bth) \ge u(q).
\eeq
Define
\beq
\lambda_N \triangleq \esssup_{\bth\in\Theta} \lambda_{\max}(\J^{(N)}(\bth))
\eeq
and note that $\lambda_N > 0$ for all $N$, since $\J^{(N)}(\bth)$ is positive definite. Thus
\begin{align}
Z^{(N)}[\b^{(N)}]
&\ge \int_\Theta \Tr \BIG{22pt}[ \! \left( \I + \pd{\b^{(N)}}{\bth} \right)
                       \left( \J^{(N)}(\bth) \right)^{-1} \notag\\
&\hspace{6em}          \cdot \left( \I + \pd{\b^{(N)}}{\bth} \right)^T \BIG{22pt}]
        p_\bth(d\bth) \notag\\
&\ge \frac{1}{\lambda_N}
     \int_\Theta \left\| \I + \pd{\b^{(N)}}{\bth} \right\|_F^2 p_\bth(d\bth) \notag\\
&\ge \frac{u(q)}{\lambda_N}.
\end{align}
Assume by contradiction that $q<v$. From \eqref{eq:lsnr prf 3}, it then follows that $u(q)>0$. Since all eigenvalues of $\J^{(N)}(\bth)$ decrease to zero, we have $\lambda_N \rightarrow 0$, and thus
\beq
\beta_N \ge \frac{u(q)}{\lambda_N} \rightarrow \infty.
\eeq
This contradicts the fact \eqref{eq:betaN bounded} that $\beta_N \le v$. We conclude that $q=v$, as required.
\end{proof}

\begin{proof}[Proof of Theorem~\ref{th:high SNR}]
The proof is analogous to that of Theorem~\ref{th:low SNR}. We begin by considering the optimization problem
\begin{align} \label{eq:prf high 1}
\inf_{\b \in \HoTn} &\int_\Theta \|\b(\bth)\|^2 p_\bth(d\bth) \notag\\
\text{s.t. } &\int_\Theta \Tr\!\left( \left( \I+\pd{\b}{\bth} \right) \J^{-1}(\bth)
                                      \left( \I+\pd{\b}{\bth} \right)^T \right)
               p_\bth(d\bth) \le t
\end{align}
for some constant $t \ge 0$. Denote the minimum value of \eqref{eq:prf high 1} by $w(t)$. Let $\bmu = \E{\bth}$ and note that $\b(\bth) = \bmu-\bth$ satisfies the constraint in \eqref{eq:prf high 1} for any $t \ge 0$, and has an objective equal to $v$ of \eqref{eq:def v}. Thus, to determine $w(t)$, it suffices to minimize \eqref{eq:prf high 1} over the set
\begin{multline}
S_t = \bigg\{ \b \in \HoTn : \int_\Theta \|\b(\bth)\|^2 p_\bth(d\bth) \le v, \notag\\
                   \int_\Theta \Tr\!\left( \left(\I+\pd{\b}{\bth}\right) \J^{-1}(\bth)
                                            \left(\I+\pd{\b}{\bth}\right)^T \right)
                        p_\bth(d\bth) \le t \bigg\}.
\end{multline}
Define
\beq
\lambda \triangleq \esssup_{\bth \in \Theta} \lambda_{\max}(\J(\bth)).
\eeq
Since $\J(\bth)$ is positive definite almost everywhere, we have $\lambda>0$. For any $\b \in S_t$, we have
\beq
\frac{1}{\lambda} \int_\Theta \left\| \I + \pd{\b}{\bth} \right\|_F^2 p_\bth(d\bth) \le t
\eeq
and therefore, by Lemma~\ref{le:triangle},
\beq
\int_\Theta \left\| \pd{\b}{\bth} \right\|_F^2 p_\bth(d\bth)
\le \left( \sqrt{t \lambda} + \sqrt{n} \right)^2.
\eeq
Hence, for any $\b \in S_t$,
\begin{align}
\|\b\|^2_{\HoTn} &= \int_\Theta \|\b(\bth)\|^2 p_\bth(d\bth)
                  + \int_\Theta \left\|\pd{\b}{\bth}\right\|_F^2 p_\bth(d\bth) \notag\\
               &\le v + \left( \sqrt{t \lambda} + \sqrt{n} \right)^2.
\end{align}
Thus $S_t$ is bounded for all $t$. It is straightforward to show that $S_t$ is also closed and convex. Therefore, employing Lemma~\ref{le:uniq min gen} (with $\ell=1$) ensures that there exists a (unique) $\bopt \in S_t$ minimizing \eqref{eq:prf high 1}.

Note that the objective in \eqref{eq:prf high 1} is 0 if and only if $\bopt(\bth)={\bf 0}$ almost everywhere. So, if ${\bf 0} \in S_t$, we have $w(t)=0$, and otherwise $w(t)>0$. Let us define
\beq \label{eq:def s}
s \triangleq \E{ \Tr( \J^{-1}(\bth) ) }
\eeq
and note that ${\bf 0} \in S_t$ if and only if $t \ge s$. Thus
\begin{align} \label{eq:prf high 2}
w(t) &= 0 \quad \text{for } t \ge s \notag\\
w(t) &> 0 \quad \text{otherwise.}
\end{align}

Let us now return to the setting of Theorem~\ref{th:high SNR}. For simplicity, we denote functions corresponding to the problem of estimating $\bth$ from $\{ \x^{(1)}, \ldots, \x^{(N)} \}$ with a superscript $(N)$. For example, from the additive property of the Fisher information \cite[\S3.4]{kay93}, we have
\beq
\J^{(N)}(\bth) = N \J(\bth).
\eeq
It follows that
\beq
(N+1)\CRB^{(N+1)}[\b,\bth] \ge N \CRB^{(N)}[\b,\bth]
\eeq
for all $\b\in\HoTn$, all $\bth\in\Theta$, and all $N$. Therefore
\beq
(N+1) Z^{(N+1)}[\b] \ge N Z^{(N)}[\b]
\eeq
for all $\b \in \HoTn$, and hence
\begin{align}
 (N+1) \beta_{N+1}
&=   \min_{\b\in\HoTn} \left( (N+1) Z^{(N+1)}[\b] \right) \notag\\
&\ge \min_{\b\in\HoTn} \left( N Z^{(N)}[\b] \right)       \notag\\
&= N \beta_N.
\end{align}
Thus $\{ N \beta_N \}$ is a non-decreasing sequence. Furthermore, we have
\beq
N Z^{(N)}[{\bf 0}] = s
\eeq
so that $N \beta_N \le s$ for all $N$. It follows that $\{ N \beta_N \}$ is non-decreasing and bounded, and therefore converges to some value $r$ such that
\beq
N \beta_N \le r \le s \quad \text{for all }N.
\eeq
To prove the theorem, we must show that $r=s$.

Let $\b^{(N)} \in \HoTn$ denote the minimizer of \eqref{eq:min HoTn} when $\bth$ is estimated from $\{ \x^{(1)}, \ldots, \x^{(N)} \}$ (the existence of $\b^{(N)}$ is guaranteed by Proposition~\ref{pr:uniq min}). We then have $N \beta_N = N Z^{(N)}[\b^{(N)}] \le r$, so that
\beq
\int_\Theta \Tr\!\left( \! \left(\I+\pd{\b^{(N)}}{\bth}\right) \! \J^{-1}(\bth) \!
                        \left(\I+\pd{\b^{(N)}}{\bth}\right)^T \right) p_\bth(d\bth) \le r.
\eeq
Thus, $\b^{(N)}$ satisfies the constraint of \eqref{eq:prf high 1} with $t=r$. As a consequence, we have
\beq
\int_\Theta \|\b^{(N)}(\bth)\|^2 p_\bth(d\bth) \ge w(r)
\eeq
and therefore
\begin{align} \label{eq:prf high 3}
N \beta_N
&= N Z^{(N)}[\b^{(N)}] \notag\\
&\ge N \int_\Theta \|\b^{(N)}(\bth)\|^2 p_\bth(d\bth) \notag\\
&\ge N w(r).
\end{align}
Now suppose by contradiction that $r<s$. It follows from \eqref{eq:prf high 2} that $w(r)>0$. Hence, by \eqref{eq:prf high 3}, $N \beta_N \rightarrow \infty$, which contradicts the fact that $N \beta_N$ is bounded. We conclude that $r=s$, as required.
\end{proof}

\bibliographystyle{IEEEtran}
\bibliography{IEEEabrv,mybib}

\end{document}